\documentclass[11pt]{article}%
\usepackage{amsfonts}
\usepackage{amsmath}
\usepackage{amssymb}
\usepackage{amsthm}
\usepackage{graphicx}
\usepackage{fullpage,hyperref}%
\setcounter{MaxMatrixCols}{30}
\providecommand{\U}[1]{\protect\rule{.1in}{.1in}}
\newtheoremstyle{example}{\topsep}{\topsep}
{}
{}
{\bfseries}
{}
{  }
{\thmname{#1}\thmnumber{ #2}\thmnote{ (#3)}}

\newtheorem{theorem}{Theorem}

\newtheorem{corollary}[theorem]{Corollary}

\newtheorem{definition}[theorem]{Definition}

\newtheorem{proposition}[theorem]{Proposition}
\newtheorem{remark}[theorem]{Remark}

\theoremstyle{example}
\newtheorem{example}[theorem]{Example}


\let\originalleft\left
\let\originalright\right
\def\left#1{\mathopen{}\originalleft#1}
\def\right#1{\originalright#1\mathclose{}}

\begin{document}

\title{\textbf{Quantum enigma machines and the locking capacity}\\\textbf{of a quantum channel}}
\author{Saikat Guha\thanks{Quantum Information Processing Group, Raytheon BBN
Technologies, Cambridge, Massachusetts 02138, USA}
\and Patrick Hayden\thanks{Department of Physics, Stanford University, 382 Via Pueblo Mall, Stanford, California 94305-4060, USA}
\and Hari Krovi\footnotemark[1]
\and Seth Lloyd\thanks{Department of Mechanical Engineering, Massachusetts
Institute of Technology, Cambridge, MA 02139, USA}
\and Cosmo Lupo\thanks{Research Laboratory of Electronics, Massachusetts Institute
of Technology, Cambridge, MA 02139, USA}
\and Jeffrey H. Shapiro\footnotemark[4]
\and Masahiro Takeoka\thanks{National Institute of Information and Communications
Technology, 4-2-1 Nukuikita, Koganei, Tokyo 184-8795, Japan} $^{\ ,}$
\footnotemark[1]
\and Mark M. Wilde\thanks{Hearne Institute for Theoretical Physics,
Department of Physics and Astronomy, Center for
Computation and Technology, Louisiana State University, Baton Rouge, Louisiana
70803, USA}}
\maketitle

\begin{abstract}
The locking effect is a phenomenon which is unique to quantum information
theory and represents one of the strongest separations between the classical
and quantum theories of information. The Fawzi-Hayden-Sen (FHS) locking
protocol harnesses this effect in a cryptographic context, whereby one party
can encode $n$ bits into $n$ qubits while using only a constant-size secret key.
The encoded message is then secure against any measurement that an eavesdropper
could perform in an attempt to recover the message, but the protocol
does not necessarily meet the composability requirements needed in
quantum key distribution applications.
In any case, the locking effect represents
an extreme violation of Shannon's classical
theorem, which states that information-theoretic security holds in the
classical case if and only if the secret key is the same size as the message.
Given this intriguing phenomenon, it is of practical interest to study the
effect in the presence of noise, which can occur in the systems of both the
legitimate receiver and the eavesdropper. This paper formally defines the
\textit{locking capacity} of a quantum channel as the maximum amount of locked
information that can be reliably transmitted to a legitimate receiver by
exploiting many independent uses of a quantum channel and an amount of secret
key sublinear in the number of channel uses. We provide general operational
bounds on the locking capacity in terms of other well-known capacities from
quantum Shannon theory. We also study the important case of bosonic channels,
finding limitations on these channels' locking capacity when coherent-state
encodings are employed and particular locking protocols for these channels
that might be physically implementable.

\end{abstract}

\section{Introduction}

The security of a cryptographic primitive can be assessed according to
different security criteria. Most modern cryptosystems are
\textit{computationally} secure---that is, their security relies on the
difficulty of breaking them in a reasonable amount of time given available
technologies. This is also the case for the \textit{enigma machines}, a family
of historical polyalphabetic ciphers in use during the earlier half of the
previous century---their security relied on the difficulty of uncovering
patterns hidden in pseudorandom sequences~\cite{Bruen}.

A stronger security criterion requires that an encrypted message is close to
being statistically independent of the corresponding unencrypted message, in
which case one speaks of \textit{information-theoretic} security. For the case
of classical systems, a good measure of correlation is the mutual information
between the unencrypted and encrypted message. If the mutual information
vanishes, the chance of successfully decrypting the message is exponentially
small in the length of the message.
Any encryption scheme with such a property cannot perform any better than
one-time pad encryption, where a truly random key is used to encrypt (and
decrypt) the message \cite{S49}. The one-time pad guarantees
information-theoretic security as long as the key is kept secret, it has the
same length as the message, and can be used only once. However, the fact that
the secret key should be the same length as the message imposes severe
practical limitations on the use of the one-time pad protocol.

On the other hand, it is now known that quantum mechanics gives a way
around these limitations. The \textit{locking effect} is a phenomenon which is
unique to quantum information theory \cite{DHLST04} and represents one of the
most striking separations between the classical and quantum theories of
information. It is responsible for important revisions to security definitions
for quantum key distribution \cite{KRBM07} and might even help to explain how
both unitarity could be preserved and most of the information leaking from an
evaporating\ black hole could be inaccessible until the final stages of
evaporation \cite{SO06,DFHL10}. Quantum data locking occurs when the
accessible information about a classical message encoded into a quantum state decreases by an amount
that is much larger than the number of qubits of a small subsystem that is discarded
\cite{DHLST04}. A device that realizes a quantum data locking protocol is called
a quantum enigma machine~\cite{QEM}.

Impressive locking schemes exist \cite{HLSW04,DFHL10,FHS11}.
Suppose that a sender and receiver share a constant number of secret key bits. Using
these secret key bits, they can then encode an $n$-bit classical message into
$n$ qubits such that an adversary who gains access to these $n$ qubits, but
who does not know the secret key, cannot do much better than to randomly guess
the message after performing an arbitrary measurement on
these $n$ qubits.

However, the cryptographic applications of quantum data
locking have to be \textquotedblleft taken with a grain of
salt,\textquotedblright\ as they are only applicable if the distribution of
the message is completely random from the perspective of the adversary.
Otherwise, the key size should increase by an amount necessary to ensure that the
distribution of the message becomes uniform.
Moreover, one might say that the strength of quantum data locking also exposes a weakness.
Indeed, as a small key is sufficient for encrypting a long message,
the leakage of a small part of the
secret key may allow an adversary to uncover a disproportionate amount of information. 
For this reason, any cryptographic primitive based on the locking effect
(called a locking scheme), does not necessarily guarantee composable security \cite{KRBM07}.
This also implies that quantum data locking cannot necessarily
be used for secure key distribution. 
The only exception is if the adversary has no option other than to perform
a collective measurement on the qubits in her possession just after she
receives them.

As stated above, Shannon proved that such a locking effect is impossible
classically \cite{S49}. That is, when using only classical resources, a sender
and receiver require a secret key whose size is proportional to the size of
the message in order for the eavesdropper to have a  negligible
amount of information about the encrypted message. Thus, after Shannon's
result, information scientists looked in a different direction in order to
determine ways for communication systems to provide secrecy in addition to
reliable transmission. In reality, all communication systems suffer from
physical-layer noise, and one might be able to determine the characteristics
of the noise to a legitimate receiver and to an untrusted eavesdropper. Such a
model is known as the wiretap channel \cite{W75}, and it is well known now
that if the noise to the eavesdropper is stronger than the noise to the
legitimate receiver, then it is possible to communicate error-free at a
positive rate such that the eavesdropper obtains a  negligible
amount of information about the messages being transmitted.

\section{Summary of results}

In this paper, we consider the performance of locking protocols in the
presence of noise, and as an important application, we consider locking
protocols for bosonic channels. There are two types of noise to consider in
any realistic locking protocol: that which affects the transmission to a
legitimate receiver and that which affects the eavesdropper's system. Both are
important to consider in any realization of a quantum enigma machine.

We begin in Section~\ref{QDL} by reviewing the locking effect and a recently
introduced quantum enigma
machine (QEM) from Ref.~\cite{QEM}. This QEM encodes a classical message into a
single-photon state spread over a collection of discrete modes and then
decodes it by direct photodetection. The encryption and decryption are
realized by applying and inverting, respectively, a single multi-mode passive
linear-optical unitary transformation, selected uniformly at random from a set
of such transformations.
Similar to historical enigma machines, QEMs can encrypt a long message using
an exponentially shorter secret key. However, unlike historical enigma
machines which were only computationally secure, quantum data locking implies
security in the sense that the outcomes of any eavesdropper measurement will be essentially independent of the message.

After the review, Section~\ref{sec:locking-capacity}
provides a formal definition of the locking capacity of a
quantum channel. In short, a locking protocol uses a quantum channel $n$ times
(for some arbitrarily large integer $n$) and has three requirements:

\begin{enumerate}
\item The receiver should be able to decode the transmitted message with an
arbitrarily small error probability.

\item The eavesdropper can recover only an arbitrarily small number of the
message bits after performing a quantum measurement on her systems.

\item The secret key rate is no more than sublinear in the number $n$ of
channel uses (for example, logarithmic in $n$).
\end{enumerate}
We define the locking capacity of a quantum channel to be the maximum rate at
which it is possible to lock classical information according to the above
requirements. 
Changing the systems to which the adversary has access
leads to different notions of locking capacity, and we distinguish the notions
by naming them 
the {\it weak locking capacity} and the {\it strong locking capacity}. 
The difference between the two is that, in the weak notion, the adversary is assumed to have 
access to only the channel environment, while, in the strong case, we allow her
access to the channel input.
We emphasize that when we use the term
(weak or strong) ``locking capacity'' without any other modifiers,
we refer to the locking capacity of a
quantum channel without additional resources such
as classical feedback. Most of the results reported
here correspond to such a forward locking capacity.
The locking capacity of channels with additional resources
such as classical feedback remains largely open. However, at the very least,
we can already say that quantum key distribution protocols provide lower bounds
on the locking capacity in this setting.

We then find operational bounds on the locking capacity in terms
of other well known capacities studied in quantum Shannon theory, and we find
other information-theoretic upper bounds on the locking capacity. We prove
that the locking capacity of an entanglement-breaking channel is equal to
zero, which demonstrates that a quantum channel should have some ability to
preserve entanglement in order for it to be able to lock information according
to the above requirements. We also show that any achievable locking rate is equal to
zero whenever a given locking protocol has a classical simulation.
Furthermore, we find a class of channels for which the weak locking capacity
is equal to both the private capacity and quantum capacity. Finally, we
discuss locking protocols for some simple exemplary channels.

Section~\ref{CSQEM} establishes several important upper bounds on the locking
capacity of channels when restricting to coherent-state encodings.
If it were possible to exploit coherent-state encodings to perform locking
at high rates,
then this would certainly turn the locking effect from an interesting
theoretical phenomenon into one with practical utility. However, we are able
to show that there are fundamental limitations on the locking capacity
when restricting to coherent-state encodings. In particular, we prove that
the \textquotedblleft strong\textquotedblright\ locking capacity of any
channel is no larger than $\log_{2}(e)$ locked bits per channel use whenever
the encoding consists of coherent states (where $e$ is the base for the
natural logarithm). We also prove that the \textquotedblleft
weak\textquotedblright\ locking capacity of a pure-loss bosonic channel is no
larger than the sum of its private capacity and $\log_{2}(e)$. 

In Section~\ref{weakQEM}, we discuss an explicit protocol 
that uses a pulse position modulation encoding of coherent states. We derive bounds on the
security and key efficiency of this coherent-state locking protocol and find that it
has qualitative features analogous to the single-mode quantum enigma machine
in the presence of linear loss.

Finally, Section~\ref{end} presents our conclusions, a discussion of the
scaling of the required physical resources, and open questions for future research.

\section{Notation}

We briefly review some notation that we use in the rest of the paper.
Let $\mathcal{B}\left(
\mathcal{H}\right)  $ denote the algebra of bounded linear operators acting on
a Hilbert space $\mathcal{H}$. The $1$-norm of an operator $X$ is
defined as%
\[
\left\Vert X\right\Vert _{1}\equiv\text{Tr}\{  \sqrt{X^{\dag}X}
\} .
\]
Let $\mathcal{B}\left(  \mathcal{H}\right)  _{+}$ denote the subset of
positive semidefinite operators (we often simply say that an operator is
\textquotedblleft positive\textquotedblright\ if it is positive
semidefinite). We also write $X\geq0$ if $X\in\mathcal{B}\left(
\mathcal{H}\right)  _{+}$. An\ operator $\rho$ is in the set $\mathcal{D}%
\left(  \mathcal{H}\right)  $\ of density operators if $\rho\in\mathcal{B}%
\left(  \mathcal{H}\right)  _{+}$ and Tr$\left\{  \rho\right\}  =1$. The
tensor product of two Hilbert spaces $\mathcal{H}_{A}$ and $\mathcal{H}_{B}$
is denoted by $\mathcal{H}_{A}\otimes\mathcal{H}_{B}$.\ Given a multipartite
density operator $\rho_{AB}\in\mathcal{H}_{A}\otimes\mathcal{H}_{B}$, we
unambiguously write $\rho_{A}=\ $Tr$_{B}\left\{  \rho_{AB}\right\}  $ for the
reduced density operator on system $A$.

A linear map $\mathcal{N}%
_{A\rightarrow B}:\mathcal{B}\left(  \mathcal{H}_{A}\right)  \rightarrow
\mathcal{B}\left(  \mathcal{H}_{B}\right)  $\ is positive if $\mathcal{N}%
_{A\rightarrow B}\left(  \sigma_{A}\right)  \in\mathcal{B}\left(
\mathcal{H}_{B}\right)  _{+}$ whenever $\sigma_{A}\in\mathcal{B}\left(
\mathcal{H}_{A}\right)  _{+}$. Let id$_{A}$ denote the identity map acting on
 $\mathcal{B}\left(  \mathcal{H}_{A}\right)$.
 A linear map $\mathcal{N}_{A\rightarrow B}$ is completely
positive if the map id$_{R}\otimes\mathcal{N}_{A\rightarrow B}$ is positive
for a reference system $R$ of arbitrary size. A linear map $\mathcal{N}%
_{A\rightarrow B}$ is trace-preserving if Tr$\left\{  \mathcal{N}%
_{A\rightarrow B}\left(  \tau_{A}\right)  \right\}  =\ $Tr$\left\{  \tau
_{A}\right\}  $ for all input operators $\tau_{A}\in\mathcal{B}\left(
\mathcal{H}_{A}\right)  $. If a linear map is completely positive and
trace-preserving, we say that it is a quantum channel or quantum operation.
For simplicity, we denote a quantum channel
 $\mathcal{N}: \mathcal{B}({\mathcal{H}}_A) \mapsto \mathcal{B}({\mathcal{H}}_B)$ simply as
$\mathcal{N}_{A\to B}$. Similarly, we denote an isometry $U: {\mathcal{H}}_A \mapsto {\mathcal{H}}_B \otimes {\mathcal{H}}_C$ simply as $U_{A\to BC}$.

The variational
distance between two probability distributions $p(x)$ and $q(x)$ is defined as
$$
\sum_x | p(x) - q(x)| .
$$
The trace distance between two quantum states
$\rho$ and $\sigma$ is defined as follows:
$$
\Vert \rho - \sigma \Vert_1 ,
$$
and it is a conventional measure used in quantum information theory to quantify
the distinguishability of two quantum states. Clearly, when the two states are commuting,
the trace distance is equal to the variational distance between the two probability
distributions corresponding to the eigenvalues of $\rho$ and $\sigma$.

The von Neumann entropy of a state $\rho\in\mathcal{D}(\mathcal{H}_A)$ is
given by $H(A)_{\rho} := - \text{Tr} \{\rho\log \rho\}$. Throughout
this paper we take the logarithm base $2$. For a tripartite
state $\rho_{ABC} \in \mathcal{D}(\mathcal{H}_{ABC})$,
the quantum mutual information and the conditional quantum mutual information
are respectively given by:
\begin{align*}
I(A;B)_\rho & \equiv H(A)_\rho + H(B)_\rho - H(AB)_\rho, \\
I(A;B|C)_\rho & \equiv I(A;BC)_\rho - I(A;C)_\rho,
\end{align*}
where $ H(A)_\rho$ denotes the von Neumann entropy of the reduced state $\rho_A$, for example.

\section{Review of quantum data locking}

\label{QDL}A quantum data locking scheme can be implemented by a set of
$|\mathcal{K}|$ unitary transformations $\{U_{k}\}_{k\in\mathcal{K}}$ acting
on a Hilbert space $\mathcal{H}_{M}$ of finite dimension $|\mathcal{M}|$
\cite{DHLST04,HLSW04,DFHL10,FHS11}. (For the moment, we restrict ourselves to
finite-dimensional Hilbert spaces, but
Definitions~\ref{def:weak-lock-protocol}\ and \ref{def:strong-lock-protocol}
appearing later on allow for encoding information
into infinite-dimensional Hilbert spaces.) Alice encodes
$|\mathcal{M}|$ equiprobable messages by means of a set of orthonormal states
$\{|m\rangle\}_{m\in\mathcal{M}}$ defining a standard basis in $\mathcal{H}%
_{M}$. The encryption is then made by applying a particular unitary $U_{k}$
with $k$ chosen uniformly at random from $\mathcal{K}$, and this unitary maps
a standard basis state $|m\rangle$ into a state $U_{k}|m\rangle$. The
label~$k$ identifies the choice of the basis and plays the role of a secret key.

It is helpful to consider a particular classical-quantum state when reasoning
about a quantum data locking protocol. For such a state, we have two classical
systems, the first associated with Alice's message and the second associated
with the secret key, and a quantum system $Q$ of dimension~$|\mathcal{M}|$
corresponding to the quantum-encoded message of Alice. This classical-quantum
state is given by the following density matrix:%
\begin{equation}
\rho_{MKQ}=\frac{1}{|\mathcal{M}||\mathcal{K}|}\sum_{m,k}|m,k\rangle\langle
m,k|_{MK}\otimes\left(  U_{k}|m\rangle\langle m|U_{k}^{\dag}\right)
_{Q}\,,\label{rhoAB}%
\end{equation}
where the sets $\{|m\rangle\}$ and $\{|k\rangle\}$ are comprised of
orthonormal states representing the message and the secret key, respectively.
The receiver Bob has access to the quantum system $Q$ and the key system $K$.
We assume that an eavesdropper Eve only has access to the quantum system $Q$
(for example, before it gets passed along to the receiver Bob). The classical
correlations between Alice's message $M$ and Bob's systems $K$ and $Q$ can be
quantified by the \textit{accessible information} \cite{S90}. This is defined
as the maximum classical mutual information that can be extracted by
performing local measurements on the bipartite state:%
\begin{equation}
I_{\mathrm{acc}}(M;KQ)_{\rho}=\max_{\mathcal{M}_{KQ\rightarrow Y}%
}I(M;Y)\,,\label{eq:bob-acc-info}%
\end{equation}
where the maximization is taken over local measurement maps $\mathcal{M}%
_{KQ\rightarrow Y}$ and $I(X;Y)=H(X)+H(Y)-H(XY)$ is the mutual information,
with $H(Z)$ denoting the Shannon entropy of the random variable~$Z$%
~\cite{Cover}.

The accessible information in (\ref{eq:bob-acc-info}) can never be larger than
$\log_{2}|\mathcal{M}|$, due to the bound $I(M;Y)\leq\log_{2}|\mathcal{M}|$
which holds for any random variable $Y$. A particular strategy for achieving
this upper bound is for Bob first to perform the controlled unitary $\sum
_{k}|k\rangle\langle k|_{K}\otimes(U_{k}^{\dag})_{Q}$, leaving the state%
\[
\frac{1}{|\mathcal{M}||\mathcal{K}|}\sum_{m,k}|m,k\rangle\langle
m,k|_{MK}\otimes|m\rangle\langle m|_{Q}.
\]
He then simply measures in the basis $\{|m\rangle\}$ to recover the message
$m$ perfectly, so that his accessible information is maximal, equal to
$\log_{2}|\mathcal{M}|$.

To assess the security of the communication, let us consider the accessible
information for a party Eve who does not have access to the secret key. We
consider the following reduced state:%
\begin{equation}
\rho_{MQ}=\frac{1}{|\mathcal{M}|}\sum_{m}|m\rangle\langle m|_{M}\otimes
\frac{1}{|\mathcal{K}|} \sum_{k} \left(  U_{k}|m\rangle\langle m|U_{k}^{\dag
}\right)  _{Q}\,, \label{rhoAE}%
\end{equation}
obtained by taking the partial trace over the key system in (\ref{rhoAB}).
The aim of Eve is to find an optimal positive operator-valued measure (POVM)
to maximize the classical mutual information. It is sufficient to consider a
POVM $\mathcal{M}_{Q\to Y}$ with rank-one measurement operators, i.e.,%
\begin{equation}
\{\mu_{y}|\phi_{y}\rangle\langle\phi_{y}|\}, \label{eq:rank-1-meas}%
\end{equation}
where each $|\phi_{y}\rangle$ is a normalized vector and $\mu_{y}>0$ (the
sufficiency of rank-one POVMs follows by a data processing argument). We then
find the following expression for Eve's accessible information about Alice's
message \cite{DHLST04}:%
\begin{equation}
I_{\mathrm{acc}}(M;Q)_{\rho}=\log_{2}{|\mathcal{M}|}-\min_{\mathcal{M}_{E\to
Y}}\sum_{y}\frac{\mu_{y}}{|\mathcal{M}||\mathcal{K}|}\sum_{k}H(q_{yk})\,,
\label{acc}%
\end{equation}
where the probability distributions $q_{yk}$ have components $q_{yk}%
^{m}=|\langle\phi_{y}|U_{k}|m\rangle|^{2}$. Notice that Eve's accessible
information is written in terms of the minimum of the Shannon entropies
$H(q_{yk})=-\sum_{m} q_{yk}^{m}\log_{2}{q_{yk}^{m}}$ averaged over $y$ and $k$.

While finding Eve's optimal POVM is generally a difficult problem, one can
obtain a good upper bound by a convexity argument \cite{DHLST04}.
Furthermore, one can choose the encoding unitaries uniformly at random
according to the Haar measure \cite{HLSW04,FHS11,Buhrman,DFHL10}, and
if one also adjoins to the message
a small ancilla system in a maximally mixed state \cite{FHS11}, then
 it is possible to reduce the adversary's
accessible information to become arbitrarily small. These latter results show
that for large enough $|\mathcal{M}|$ there exist data locking schemes with
$\log_{2}|\mathcal{K}|$ negligibly small in comparison to $\log_{2}%
|\mathcal{M}|$ and for which
\[
I_{\mathrm{acc}}(M;Q)_{\rho}\ll I_{\mathrm{acc}}(M;KQ)_{\rho}.
\]
That means that a relatively short secret key can be used to encrypt an
exponentially longer message. To be more precise, consider the results of
\cite{FHS11}, according to which for $\left\vert \mathcal{M}\right\vert$ large enough 
there exist choices of $|\mathcal{K}|$ unitaries, with
\begin{equation}
\log_{2}{|\mathcal{K}|} = 4\log_{2}{(\varepsilon^{-1})}
+ O(\log_2\log_{2}{(\varepsilon^{-1})})\,,
\end{equation}
such that
\begin{equation}
I_{\mathrm{acc}}(M;Q)_{\rho} \leq \varepsilon \log_{2}{|\mathcal{M}|} \,,
\end{equation}
for any $\varepsilon>0$.
Moreover, if one randomly chooses the
$|\mathcal{K}|$ unitaries according to the Haar distribution on the unitary
group, the probability of picking up a set with this property approaches one
exponentially fast in the limit as $\left\vert \mathcal{M}\right\vert
\rightarrow\infty$.

In quantum data locking, the removal of a subsystem reduces the accessible
information by an amount larger than the number of qubits removed. This is a
purely quantum feature which has no classical analog. For comparison,
consider a classical counterpart of the quantum data locking setting, in which
Alice has access to a message variable $M$, Bob to an output random
variable~$Y$ and key variable $K$, while Eve has access only to $Y$. In the
classical framework, the following inequality holds%
\begin{equation}
I(M;YK)-I(M;Y)=I\left(  M;K|Y\right)  \leq H(K)\leq\log_{2}{|\mathcal{K}|}\,.
\label{cl-ineq}%
\end{equation}
This inequality shows that in
the classical framework, removal of the key variable $K$ reduces the
mutual information by no more than $\log_{2}{|\mathcal{K}|}$.

In the quantum case as discussed above, this inequality can be violated by an
arbitrarily large amount by replacing the classical mutual information with
the accessible information. A violation of the classical inequality in
(\ref{cl-ineq})\ can be quantified in terms of the following ratios
\cite{DHLST04,HLSW04}:%
\begin{align}
r_{1}  &  =\frac{I_{\mathrm{acc}}(M;Q)_{\rho}}{I_{\mathrm{acc}}(M;KQ)_{\rho}%
}\,,\label{r1}\\
r_{2}  &  =\frac{\log_{2}{|\mathcal{K}|}}{I_{\mathrm{acc}}(M;KQ)_{\rho
}-I_{\mathrm{acc}}(M;Q)_{\rho}}\,. \label{r2}%
\end{align}
The first is the ratio of the accessible information without the secret key to
that with the secret key. The second is the ratio of the key length to the
amount of information that Bob can unlock by having access to the key.
For a good locking scheme, both of
these quantities should be small, and the quantum data locking schemes
discussed above are such that both $r_{1}$ and $r_{2}$ can be made arbitrarily
small. On the other hand, the inequality in (\ref{cl-ineq}) implies that
$r_{2}\geq1$ for any locking scheme that uses classical resources only. Notice
that the one-time pad protocol has $r_{2}=1$ because the number of bits in the
key is equal to the amount of unlocked information for Bob.

\subsection{Quantum enigma machine}\label{unaryQEM}

A particular example of a QEM was proposed in
Ref.~\cite{QEM}.
This QEM implements an optical realization of quantum data locking, in
which Alice exploits a pulse position modulation (PPM) encoding using
single-photon states over $n$ optical modes \cite{QEM}. The message states
$|m\rangle=a_{m}^{\dag}|0\rangle$ represent the states of a single photon
occupying one out of a set of $n$ bosonic modes with canonical operators
$\{a_{m},a_{m}^{\dag}\}_{m\in\{1,\dots,n\}}$. Thus, for this case, we have
$n=|\mathcal{M}|$. The unitaries $\{U_{k}\}_{k\in\mathcal{K}}$ are realized as
passive linear-optical transformations acting on $n$ modes. The encryption
through a passive linear-optical unitary $U_{k}$ transforms the message states
into
\begin{equation}
|m\rangle_{k}:=U_{k}|m\rangle=\sum_{m^{\prime}=1}^{n} \widetilde{U}%
_{k}^{\left(  m,m^{\prime}\right)  } |m^{\prime}\rangle\, ,
\end{equation}
where $\widetilde{U}_{k}$ is the corresponding $n\times n$ unitary matrix
acting on the mode labels. The effect of the encryption is to spread a single
photon coherently over $n$ modes.

Let us first assume that Alice and Bob communicate via a noiseless quantum
channel. Then Bob receives the state prepared by Alice unperturbed. He
decrypts the message by first applying the inverse transformation $U_{k}%
^{\dag}$ and then by performing photodetection on the modes
$\{a_{m}\}$. We assume that Eve may intercept the signal but she does not
know which unitary has been used for encryption. 
Then, a direct application of the results of \cite{FHS11} shows that Eve's accessible
information can be made arbitrary small using a pre-shared secret key of
length logarithmic in the length of the message.

One natural application of a QEM is in synergy with standard quantum key
distribution (QKD) \cite{BB84,RevModPhys.81.1301}---that is, a relatively
short secret key can be first established by QKD and then used to encrypt a
much (exponentially) longer message through the QEM. This combination of QKD
and QEM in an all-quantum-optical cryptosystem could possibly overcome the
bit-rate limitations of standard QKD, but more work is necessary to determine
if this is the case.

Let us now suppose that Alice and Bob communicate through a pure-loss bosonic
channel with transmissivity $\eta\in(0,1)$ and Eve makes a passive wiretap attack on the
communication line, hence getting the photon lost in the channel
with probability no larger than $1-\eta$.
A simple feedback-assisted strategy allows for Alice and Bob to use the same scheme even for transmissivity
values below $50\%$. Notice that the only effect of the pure-loss channel is
to induce a probabilistic leakage of the photon. Hence, each time Bob detects
a photon (which happens with probability $\eta$) he can be sure that he has
correctly decrypted Alice's message. On the other hand, if Bob's
photodetectors do not produce a click (which happens with probability $1-\eta
$) he can request for Alice to resend. This shows that with the help of a
classical feedback channel Alice and Bob can attain the accessible
information
\begin{equation}
I_{\mathrm{acc}}(M;KQ)_{\rho}=\eta\log_{2}{n}\,.
\end{equation}
Although reduced, this value of the accessible information equals the maximum
value achievable through a pure-loss bosonic channel with a mean value of
$n^{-1}$ photons per mode \cite{QEM,GiovPRL}. 

Lloyd argues that such a scheme should be secure in principle \cite{QEM}.
However, a critical assumption for this security to hold is that Eve should attack 
each block that she receives independently, in which case her accessible information 
is reduced by a factor $1-\eta$ when compared to the lossless case.
Indeed, an important assumption for the security of any locking
protocol is that the distribution of the message is uniform from the
perspective of the adversary. If this is not the case (as for repeated
transmission of the same message when it does not show up at the receiver's
end), then the secret key needs to be large enough so that the distribution of
the message becomes uniform (see Proposition~4.16 in Ref.~\cite{F12}).

Concerning the key efficiency of the protocol, we can estimate the key efficiency ratio as
\begin{equation}
r_{2} \simeq \frac{4\log_{2}{(\varepsilon^{-1}})}{\eta\log_{2}{n}}\,.
\end{equation}
This expression implies that, although $r_{2}$ can be made arbitrarily small
by increasing $n$, the number of bosonic modes needed to fulfill the key
efficiency condition $r_{2}<1$ grows exponentially with decreasing $r_{2}$ and $\eta$.
This feature is first of all a consequence of the fact that the quantum data locking scheme
in \cite{FHS11} (similar conclusions are also obtained using the results of \cite{HLSW04,DFHL10})
requires a high-dimensional Hilbert space. On top of that, there is the fact that
the PPM encoding, as remarked above, is highly inefficient as it encodes $\log_2{n}$ qubits into
$n$ optical modes.

According to Definition \ref{def:weak-lock-protocol} below, this QEM is an
instance of an $(n,R,\varepsilon)$ weak locking protocol (assisted by classical
feedback) for the pure-loss bosonic channel with transmissivity $\eta$, with a
locking rate $R=\left[  \eta\log_{2}{n}\right]/n$.
It is worthwhile to notice that, due to the inefficiency of PPM encoding,
the rate of this QEM approachs zero as $n$ increases.

\section{The locking capacity of a quantum channel}

\label{sec:locking-capacity}
In this section, we take a more general approach to quantum data locking than
that pursued in prior work by defining the \textit{locking capacity} of a
quantum channel. Our goal is to understand the locking effect in the setting
of quantum Shannon theory, where a sender and receiver are given access to
$n$\ independent uses of a noisy quantum channel (where $n$ is an arbitrarily large
integer). Their aim is to exploit some sublinear (in $n$) amount of secret key
in order to lock classical messages from an adversary, in the sense that this
adversary will not be able to do much better than random guessing
when performing a quantum measurement to
learn about the transmitted message. Also, we demand that
the legitimate receiver (who knows
the value of the secret key) be able to recover the classical message with an
arbitrarily small probability of error. This leads us naturally to the
following formal definition of a locking protocol for a noisy channel:

\begin{definition}
[Weak locking protocol]\label{def:weak-lock-protocol}An $\left(
n,R,\varepsilon\right)  $ weak locking protocol for a channel $\mathcal{N}%
_{A\rightarrow B}$ consists of encoding and decoding maps $\mathcal{E}%
_{MK\rightarrow A^{n}}$ and $\mathcal{D}_{B^{n}K\rightarrow\hat{M}}$,
respectively. The encoding $\mathcal{E}_{MK\rightarrow A^{n}}$\ acts on a
message system $M$ and a key system $K$ and outputs the system $A^{n}$ for
input to $n$ uses of the channel. The decoding map $\mathcal{D}_{B^{n}%
K\rightarrow\hat{M}}$ acts on the output systems $B^{n}$ and the key system
$K$ to produce a classical system $\hat{M}$ containing the receiver's estimate
of the message. Without loss of generality, the encoding consists of
$\left\vert \mathcal{M}\right\vert \left\vert \mathcal{K}\right\vert $ quantum
states $\rho_{m,k}$, where $\left\vert \mathcal{M}\right\vert $ is the number
of messages and $\left\vert \mathcal{K}\right\vert $ is the number of key
values. Furthermore, the decoding consists of $\left\vert \mathcal{K}%
\right\vert $ POVMs \thinspace$\{\Lambda_{m}^{\left(  k\right)  }%
\}_{m\in\mathcal{M}}$. 
The rate $R=\log_{2}\left\vert
\mathcal{M}\right\vert /n$ and the parameter $\varepsilon >0$.
The protocol should satisfy the following requirements:

\begin{enumerate}
\item Given the key, the receiver can decode the transmitted message well on average:%
\[
\frac{1}{\left\vert \mathcal{M}\right\vert \left\vert \mathcal{K}\right\vert
}\sum_{m,k}\operatorname{Tr}\left\{  \Lambda_{m}^{\left(  k\right)  }\left(
\mathcal{N}_{A\rightarrow B}\right)  ^{\otimes n}\left(  \rho_{m,k}\right)
\right\}  \geq1-\varepsilon.
\]

\item Let $\{\Gamma_y\}$ be a POVM that Eve can perform in an attempt to learn about
the message $M$. After she performs this measurement, the joint classical-classical
state of the message and her measurement outcome is as follows:
\[
\frac{1}{\left\vert \mathcal{M}\right\vert }\sum_{m}\left\vert m\right\rangle
\left\langle m\right\vert _{M}\otimes \sum_y \operatorname{Tr}
\left\{ \Gamma_y \left(\frac{1}{\left\vert \mathcal{K}%
\right\vert }\sum_{k}\left(  \mathcal{N}_{A\rightarrow E}\right)  ^{\otimes
n}\left(  \rho_{m,k}\right)\right)  \right\} |y\rangle\langle y|_Y,
\]
where $\mathcal{N}_{A\rightarrow E}$ is the channel complementary to
$\mathcal{N}_{A\rightarrow B}$.
Equivalently, the joint probability distribution $p_{M,Y}(m,y)$ is equal to
\[
p_{M,Y}(m,y) = \frac{1}{\left\vert \mathcal{M}\right\vert }
 \operatorname{Tr}
\left\{ \Gamma_y \left(\frac{1}{\left\vert \mathcal{K}%
\right\vert }\sum_{k}\left(  \mathcal{N}_{A\rightarrow E}\right)  ^{\otimes
n}\left(  \rho_{m,k}\right)\right)  \right\} ,
\]
Our security criterion (see also Ref.~\cite{FHS11})
is that, for any measurement outcome $y$ of Eve, the variational distance between
the message distribution $p_M(m)$ and the distribution $p_{M|Y}(m|y)$ for the message
conditioned on any particular measurement outcome should be no larger
than $\varepsilon$:
\begin{equation}
\sum_m |p_M(m) - p_{M|Y}(m|y) | \leq \varepsilon .
\label{eq:sec-crit-var-dist}
\end{equation}
The interpretation here is that Eve cannot do much better than to randomly
guess the message if all
of the conditional distributions $p_{M|Y}(m|y)$ are indistinguishable from
the message distribution.

\item The secret key consumption grows sublinearly in the number $n$ of channel uses.
\end{enumerate}
\end{definition}

In a weak locking protocol, it is assumed that the eavesdropper has access 
to the channel environment only. 
A stronger locking protocol is obtained if we allow for the eavesdropper to have access 
to the channel input (or, equivalently, to both the channel output and environment):

\begin{definition}
[Strong locking protocol]\label{def:strong-lock-protocol}An $\left(
n,R,\varepsilon\right)  $ strong locking protocol is similar to a weak locking
protocol, except that we allow for Eve to have access to the $A^n$ systems, so
that she can perform a measurement on the $A^n$ systems
of the following state:
\[
\frac{1}{\left\vert \mathcal{M}\right\vert }\sum_{m}\left\vert m\right\rangle
\left\langle m\right\vert _{M}\otimes\frac{1}{\left\vert \mathcal{K}%
\right\vert }\sum_{k}\left(  \rho_{m,k}\right)  _{A^{n}}.
\]
We then demand that the variational distance as in (\ref{eq:sec-crit-var-dist})
can be made less than
an arbitrarily small positive constant $\varepsilon$.

\end{definition}

\begin{remark}
One could alternatively allow for the adversary to have access to the output of the channel,
but we do not explore such a possibility in this paper. 
\end{remark}

\begin{remark}\label{rem:FA-acc-bound}
The Fannes-Audenaert inequality \cite{Fannes73,A07} for continuity of entropy
implies that if (\ref{eq:sec-crit-var-dist})
holds, then we get the following bound on Eve's accessible
information:
\begin{equation}
I_{\operatorname{acc}}\left(  M;E^{n}\right) \leq
h_2(\varepsilon/2) + \varepsilon nR/2,
\label{eq:acc-info-bound}
\end{equation}
where $h_2$ is the binary entropy and $n$ and $R$ are as in
Definition~\ref{def:weak-lock-protocol}.
In more detail, recall the Fannes-Audenaert inequality for continuity of entropy:
$$
T \equiv \tfrac{1}{2}\Vert \rho - \sigma \Vert_1  \implies
|H(\rho) - H(\sigma)| \leq h_2(T) + T \log (d-1) ,
$$
where $h_2$ is the binary entropy and $d$ is the dimension of the states.
Applying this to the condition in (\ref{eq:sec-crit-var-dist}) gives
\begin{align}
H(M) - H(M|Y=y) & \leq h_2(\varepsilon/2) + \tfrac{\varepsilon}{2}
\log(|\mathcal{M}|-1) \\
& \leq  h_2(\varepsilon/2) + \varepsilon
nR/2.
\end{align}
Since the above inequality holds for any measurement of Eve,
averaging it with respect to the distribution $p_Y(y)$ gives the inequality 
in (\ref{eq:acc-info-bound}).
\end{remark}

\begin{remark}
If desired, one can demand further for the secret key rate of an $\left(
n,R,\varepsilon\right)  $ weak or strong locking protocol to be consumed at a
particular sublinear rate (for example, a logarithmic number of secret key bits
 or perhaps $\sqrt{n}$ secret key bits for $n$ channel uses).
However, the present paper establishes several upper bounds
on locking capacity in an IID~setting, and these bounds converge to the same quantity
in the large $n$ limit regardless
of which sublinear rate is chosen. Also, the FHS protocol \cite{FHS11} is very strong,
in the sense that it uses such a small amount of secret key.
Thus, in light of these two observations
 it seems reasonable to define locking capacity in such a coarse-grained manner.
However, other characterizations of locking capacity in a finite blocklength setting 
or in a one-shot setting might change depending on the amount of secret key allowed
(so it would be necessary to specify in more detail the amount of secret key allowed).
\end{remark}

\begin{remark}
\label{rem:strong-implies-weak-lock}Observe that an $\left(  n,R,\varepsilon
\right)  $ strong locking protocol is also an $\left(  n,R,\varepsilon\right)
$ weak locking protocol, but the other implication is not necessarily true.
\end{remark}

\begin{remark}
The security and key efficiency ratios become arbitrarily small for a strong
locking protocol. Indeed, from the fact that $I_{\operatorname{acc}}\left(
M;A^{n}\right)  \leq h_2(\varepsilon/2) + \varepsilon
nR/2$ and the fact that the receiver can decode
with the key, so that $I_{\operatorname{acc}}\left(  M;B^{n}K\right)
\approx\log_{2}\left\vert \mathcal{M}\right\vert $, it follows that the
security ratio $r_{1}\leq h_2(\varepsilon/2)/(nR) + \varepsilon
/2$. Also, since we require the key to be
sublinear in the message length, it follows that the key efficiency ratio
$r_{2}=o\left(  n\right)  /O\left(  n\right)  $, which vanishes in the limit
as $n\rightarrow\infty$.
\end{remark}

In what follows, we use the modifier \textquotedblleft weak\textquotedblright%
\ or \textquotedblleft strong\textquotedblright\ only when we need to
distinguish between them.

\begin{definition}
[Achievable rate for locking]\label{def:achievable-rate-locking}A rate $R$ is
achievable if $\forall\,\delta,\varepsilon>0$ and sufficiently large $n$,
there exists an $\left(  n,R-\delta,\varepsilon\right)  $ locking protocol.
\end{definition}

\begin{definition}
[Locking capacity]\label{def:locking-cap}The locking capacity $L\left(
\mathcal{N}\right)  $\ of a quantum channel is the supremum of all achievable
rates:%
\[
L\left(  \mathcal{N}\right)  \equiv\sup\left\{  R\ |\ R\text{ is
achievable}\right\}  .
\]
Let $L_{W}\left(  \mathcal{N}\right)  $ and $L_{S}\left(  \mathcal{N}\right)
$ denote the weak and strong locking capacity, respectively.
\end{definition}

\subsection{Relation of the locking capacity to other capacities}

Let $Q\left(  \mathcal{N}\right)  $, $P\left(  \mathcal{N}\right)  $, and
$C\left(  \mathcal{N}\right)  $\ denote the quantum
\cite{PhysRevA.54.2614,PhysRevA.54.2629,BNS98,BKN98,PhysRevA.55.1613,capacity2002shor,ieee2005dev}%
, private \cite{ieee2005dev,1050633}, and classical
\cite{Hol98,PhysRevA.56.131}\ capacities of a quantum channel $\mathcal{N}$,
respectively. By employing operational arguments, we can determine that the
following bounds hold%
\begin{equation}
Q\left(  \mathcal{N}\right)  \leq P\left(  \mathcal{N}\right)  \leq
L_{W}\left(  \mathcal{N}\right)  \leq C\left(  \mathcal{N}\right)  .
\label{eq:weak-lock-bounds}%
\end{equation}
Indeed, for any channel, its quantum capacity is less than the private
classical capacity because any scheme for quantum communication can be used
for private classical communication such that the classical information is
protected from the environment of the channel. Furthermore, the inequality
$P\left(  \mathcal{N}\right)  \leq L_{W}\left(  \mathcal{N}\right)  $ holds
because any $\left(  n,R,\varepsilon\right)  $ private classical communication
protocol satisfies the three requirements of a weak locking protocol
\cite{ieee2005dev,1050633}. Finally, the requirements of a weak locking
protocol are more restrictive than those for classical communication, so that
$L_{W}\left(  \mathcal{N}\right)  \leq C\left(  \mathcal{N}\right)  $.

Operational arguments and the existence of the
Fawzi-Hayden-Sen\ (FHS)\ locking protocol \cite{FHS11}\ also lead to the
following bounds on the strong locking capacity:%
\begin{equation}
Q\left(  \mathcal{N}\right)  \leq L_{S}\left(  \mathcal{N}\right)  \leq
L_{W}\left(  \mathcal{N}\right)  . \label{eq:strong-lock-bounds}%
\end{equation}
We first justify the bound $Q\left(  \mathcal{N}\right)  \leq L_{S}\left(
\mathcal{N}\right)  $, already observed in some sense in Ref.~\cite{F12}. The
strong locking capacity of the noiseless qubit channel is equal to one, due to
the existence of the FHS\ locking protocol (see
Example~\ref{ex:noiseless-qudit} below). By concatenating the FHS\ locking
protocol with a family of capacity-achieving quantum error correcting codes,
we obtain a family of strong locking protocols that achieve a strong locking
rate equal to the quantum capacity of the channel. The bound $L_{S}\left(
\mathcal{N}\right)  \leq L_{W}\left(  \mathcal{N}\right)  $ follows because a
strong locking protocol always meets the demands of a weak locking protocol
(recall Remark~\ref{rem:strong-implies-weak-lock}). The relationship between
the private capacity and the strong locking capacity is less clear. Indeed, a
private communication protocol for a quantum channel protects information only
from the environment of the channel (which we think of as the eavesdropper's
system). For this reason, it does not meet the demands of a strong locking
protocol. However, we could consider a \textquotedblleft strong
privacy\textquotedblright\ protocol in which the goal is to protect a message
from both the environment  and output 
of the channel, under the assumption that the party controlling these systems
does not have access to the shared key. In this case, the \textquotedblleft
strong private capacity\textquotedblright\ would always be equal to zero
because a sublinear amount of secret key is insufficient to get any
\textquotedblleft strong private capacity\textquotedblright\ out of the
channel. For this reason, the bounds in (\ref{eq:weak-lock-bounds}%
)-(\ref{eq:strong-lock-bounds}) are the best simple ones that we can derive
from operational considerations.

We can also consider the case in which a classical feedback channel is
available for free from the receiver to the sender. In this case, we denote
the resulting capacities with a superscript $^{\left(  \leftarrow\right)  }$.
By employing the same operational arguments as above, we find that the
following inequalities hold%
\begin{align}
Q^{\left(  \leftarrow\right)  }\left(  \mathcal{N}\right)   &  \leq P^{\left(
\leftarrow\right)  }\left(  \mathcal{N}\right)  \leq L_{W}^{\left(
\leftarrow\right)  }\left(  \mathcal{N}\right)  \leq C^{\left(  \leftarrow
\right)  }\left(  \mathcal{N}\right)  ,\\
Q^{\left(  \leftarrow\right)  }\left(  \mathcal{N}\right)   &  \leq
L_{S}^{\left(  \leftarrow\right)  }\left(  \mathcal{N}\right)  \leq
L_{W}^{\left(  \leftarrow\right)  }\left(  \mathcal{N}\right)  .
\label{eq:feedback-op-inequalities}%
\end{align}
Capacities assisted by classical feedback need not be equal to the unassisted
capacities. For example, it is known that the quantum and private capacities
assisted by classical feedback can be strictly larger than the corresponding
unassisted capacities \cite{LLS09}, and this is true even for the classical
capacity \cite{SS09}.
The locking capacity of quantum channels with classical feedback remains largely
an open question.

\subsection{Upper bounds on the locking capacity}

Let us define the information quantity $L_{W}^{\left(  u\right)  }\left(
\mathcal{N}\right)  $ as follows:%
\begin{equation}
L_{W}^{\left(  u\right)  }\left(  \mathcal{N}\right)  \equiv\max_{\left\{
p\left(  x\right)  ,\rho_{x}\right\}  } \left[I\left(  X;B\right)  -I_{\text{acc}%
}\left(  X;E\right) \right] ,\label{eq:weak-lock-upper-quantity}%
\end{equation}
where the above information quantities are evaluated with respect to a state
of the following form:%
\begin{equation}
\sum_{x}p_{X}\left(  x\right)  \left\vert x\right\rangle \left\langle
x\right\vert _{X}\otimes U_{A\rightarrow BE}^{\mathcal{N}}\left(  \rho
_{x}\right)  ,\label{eq:locking-cap-up-bnd-state}%
\end{equation}
$U_{A\rightarrow BE}^{\mathcal{N}}$ is an isometric extension of the channel
$\mathcal{N}$, and the superscript $\left(  u\right)  $ indicates that this
quantity will function as an upper bound on the locking capacity.
The following theorem establishes that the
regularization of $L_{W}^{\left(  u\right)  }\left(  \mathcal{N}\right)  $
provides an upper bound on the weak locking capacity of a quantum channel.
This bound is nontrivial given that the regularization of $L_{W}^{\left(
u\right)  }\left(  \mathcal{N}\right)  $ does not depend on the secret key
used in a given locking protocol.

\begin{theorem}
\label{thm:WLC}The weak locking capacity $L_{W}\left(  \mathcal{N}\right)  $
of a quantum channel $\mathcal{N}$ is upper bounded by the regularization of
$L_{W}^{\left(  u\right)  }\left(  \mathcal{N}\right)  $:%
\[
L_{W}\left(  \mathcal{N}\right)  \leq\lim_{n\rightarrow\infty}\frac{1}{n}%
L_{W}^{\left(  u\right)  }\left(  \mathcal{N}^{\otimes n}\right)  .
\]

\end{theorem}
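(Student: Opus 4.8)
The plan is to run a converse argument by feeding a single, carefully chosen ensemble into the maximization defining $L_{W}^{(u)}(\mathcal{N}^{\otimes n})$. Starting from an arbitrary $(n,R,\varepsilon)$ weak locking protocol with states $\rho_{m,k}$ and decoding POVMs $\{\Lambda_{m}^{(k)}\}$, I would take the classical label to be the \emph{pair} $X=(M,K)$ with uniform distribution $p(m,k)=1/(|\mathcal{M}||\mathcal{K}|)$ and input states $\rho_{m,k}$ on $A^{n}$. Sending these through the isometric extension $(U^{\mathcal{N}})^{\otimes n}$ produces a state of exactly the form in (\ref{eq:locking-cap-up-bnd-state}), so the difference $I(MK;B^{n})-I_{\text{acc}}(MK;E^{n})$ for this ensemble is at most $L_{W}^{(u)}(\mathcal{N}^{\otimes n})$. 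The whole argument then reduces to lower bounding this difference by something close to $nR$.

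For the Bob term I would discard the key contribution, writing $I(MK;B^{n})\geq I(M;B^{n}|K)$, and then bound $I(M;B^{n}|K)$ from below key-by-key. For each fixed $k$, the Holevo quantity $I(M;B^{n}|K=k)$ dominates the accessible information of the ensemble $\{p(m),\mathcal{N}^{\otimes n}(\rho_{m,k})\}$, which in turn dominates $I(M;\hat{M}_{k})$ for Bob's actual decoding measurement. Applying Fano's inequality to $(M,\hat{M}_{k})$ and averaging over $k$---using the decoding requirement $\tfrac{1}{|\mathcal{M}||\mathcal{K}|}\sum_{m,k}\operatorname{Tr}\{\cdots\}\geq 1-\varepsilon$ together with concavity of $h_2$---yields $I(M;B^{n}|K)\geq(1-\varepsilon)nR-h_2(\varepsilon)$.

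For the Eve term the key observation is the chain rule $I(MK;Y)=I(M;Y)+I(K;Y|M)$ for any outcome $Y$ that Eve obtains from a measurement of $E^{n}$. The first piece is controlled by the security criterion: since the marginal state Eve holds is the key-averaged state, $I(M;Y)\leq I_{\text{acc}}(M;E^{n})$, and Remark~\ref{rem:FA-acc-bound} bounds this by $h_2(\varepsilon/2)+\varepsilon nR/2$. The second piece is bounded crudely by $I(K;Y|M)\leq H(K)\leq\log_{2}|\mathcal{K}|$. Maximizing over Eve's measurements then gives $I_{\text{acc}}(MK;E^{n})\leq h_2(\varepsilon/2)+\varepsilon nR/2+\log_{2}|\mathcal{K}|$. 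Combining the two estimates,
\[
(1-\varepsilon)nR-h_2(\varepsilon)-h_2(\varepsilon/2)-\tfrac{\varepsilon}{2}nR-\log_{2}|\mathcal{K}|\leq L_{W}^{(u)}(\mathcal{N}^{\otimes n}).
\]
Dividing by $n$, sending $n\rightarrow\infty$ (so that $\log_{2}|\mathcal{K}|/n\rightarrow 0$ by the sublinear-key requirement and the constant terms vanish), and finally letting $\varepsilon\rightarrow 0$ recovers $R\leq\lim_{n\rightarrow\infty}\tfrac{1}{n}L_{W}^{(u)}(\mathcal{N}^{\otimes n})$, as claimed.

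I expect the main obstacle to be conceptual rather than computational: correctly accounting for the secret key. The trick that makes the bound key-independent is to fold $K$ into the classical index $X=(M,K)$, so that Bob reaps the \emph{full} benefit of the key through decoding (forcing $I(MK;B^{n})\gtrsim nR$), whereas Eve, lacking the key, pays at most the key entropy $H(K)=\log_{2}|\mathcal{K}|$, which is sublinear and therefore washes out under regularization. One must also take care that the asymmetric use of Holevo information for Bob (lower bounded via Fano) and accessible information for Eve (upper bounded via the security criterion) is exactly what the definition of $L_{W}^{(u)}$ in (\ref{eq:weak-lock-upper-quantity}) permits.
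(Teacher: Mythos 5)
Your proof is correct, and it reaches the theorem by a genuinely different decomposition than the paper's. The paper keeps the classical label equal to $M$ alone and feeds the key-averaged inputs $\frac{1}{|\mathcal{K}|}\sum_{k}\rho_{m,k}$ into the maximization in (\ref{eq:weak-lock-upper-quantity}) (legitimate because the $\rho_{x}$ there may be mixed); the key cost then lands on Bob's side of the ledger, via $I(M;B^{n}K)=I(M;B^{n})+I(M;K|B^{n})$ and $I(M;K|B^{n})\leq H(K)\leq o(n)$, while Eve's term needs no decomposition at all, since the security criterion (\ref{eq:sec-crit-var-dist}) together with Remark~\ref{rem:FA-acc-bound} directly bounds $I_{\text{acc}}(M;E^{n})$ for exactly the key-averaged state she holds. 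You instead fold the key into the label $X=(M,K)$, so the key entropy is paid on Eve's side through $I(K;Y|M)\leq\log_{2}|\mathcal{K}|$, and the rate is recovered on Bob's side through $I(MK;B^{n})\geq I(M;B^{n}|K)$ plus Holevo and Fano. The second real difference is the correctness step: the paper converts the decoding condition into trace-distance closeness of the protocol's output to the ideal correlated state $\overline{\Phi}_{M\hat{M}}$ and invokes Alicki-Fannes-Audenaert continuity of mutual information, whereas your key-by-key Fano argument, averaged using concavity of $h_{2}$, is more elementary and is better matched to the form in which Definition~\ref{def:weak-lock-protocol} actually states the decoding requirement. Both routes produce the same kind of $o(n)+n\cdot(\text{vanishing in }\varepsilon)$ slack and hence the identical regularized bound; the one point your write-up uses implicitly---that the joint distribution of $(M,Y)$ under any measurement of $E^{n}$ is exactly the distribution the security criterion constrains, because Eve's marginal state is the key-averaged one---does hold, so there is no gap.
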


\begin{proof}
The proof below places an upper bound on the weak locking capacity of a
quantum channel by considering the most general protocol for this task.
Suppose that the task is to generate shared, locked randomness rather than to
send a locked message (placing an upper bound on achievable rates for this
task gives an upper bound on achievable rates for the latter task, since a
protocol for the latter task can be used to accomplish the former task). The
most general protocol has Alice input her share of the key $K$ and her
variable $M$ into an encoder that outputs some systems $A^{n}$ to be fed into
the inputs of the channels. She then transmits these systems $A^{n}$ over the
channel, so that Bob receives the output systems $B^{n}$. Let the following
state describe all systems at this point in the protocol:%
\[
\omega_{MKB^{n}}\equiv\frac{1}{\left\vert \mathcal{M}\right\vert \left\vert
\mathcal{K}\right\vert }\sum_{m,k}\left\vert m\right\rangle \left\langle
m\right\vert _{M}\otimes\left\vert k\right\rangle \left\langle k\right\vert
_{K}\otimes\mathcal{N}_{A\rightarrow B}^{\otimes n}\left(  \rho_{k,m}\right)
.
\]
Bob inputs his share of the key $K$ and the systems $B^{n}$ into a decoder
$\mathcal{D}_{KB^{n}\rightarrow\hat{M}}$\ to recover $\hat{M}$, which is his
estimate of Alice's variable $M$. The final state of the protocol is given by%
\[
\omega_{M\hat{M}}^{\prime}\equiv\frac{1}{\left\vert \mathcal{M}\right\vert
\left\vert \mathcal{K}\right\vert }\sum_{m,k}\left\vert m\right\rangle
\left\langle m\right\vert _{M}\otimes\mathcal{D}_{KB^{n}\rightarrow\hat{M}%
}\left[  \left\vert k\right\rangle \left\langle k\right\vert _{K}%
\otimes\mathcal{N}_{A\rightarrow B}^{\otimes n}\left(  \rho_{k,m}\right)
\right]  .
\]
If the protocol is any good for locking the message $M$, then the ideal
distribution of $M$ and $\hat{M}$ deviates from the actual distribution of
these variables by no more than $\varepsilon$, in the sense that%
\[
\Big\Vert \overline{\Phi}_{M\hat{M}}-\omega_{M\hat{M}}^{\prime}\Big\Vert_{1}\leq\varepsilon,
\]
where%
\[
\overline{\Phi}_{M\hat{M}}\equiv\frac{1}{\left\vert \mathcal{M}\right\vert
}\sum_{m}\left\vert m\right\rangle \left\langle m\right\vert _{M}%
\otimes\left\vert m\right\rangle \left\langle m\right\vert _{\hat{M}}.
\]
The above condition is equivalent to the condition that $\Pr\{\hat{M}\neq
M\}\leq\varepsilon/2$ because
$$\frac{1}{2}\Big\Vert \overline{\Phi}_{M\hat
{M}}-\omega_{M\hat{M}}^{\prime}\Big\Vert _{1}=\Pr\{\hat{M}\neq M\} .$$ Also,
from Remark~\ref{rem:FA-acc-bound}, Eve's accessible information $I_{\text{acc}}\left(
M;E^{n}\right)  $ about the variable $M$ is bounded from above by
$\varepsilon''n$, where
$\varepsilon''\equiv h_2(\varepsilon/2)/n + \varepsilon R/2$, whenever
(\ref{eq:sec-crit-var-dist}) is satisfied.
We can now proceed with bounding achievable rates for any
locking protocol:%
\begin{align*}
nR &  =H\left(  M\right)  _{\overline{\Phi}}\\
&  =I(M;\hat{M})_{\overline{\Phi}}\\
&  \leq I(M;\hat{M})_{\omega^{\prime}}+n\varepsilon^{\prime}\\
&  \leq I\left(  M;B^{n}K\right)  _{\omega}+n\varepsilon^{\prime}\\
&  =I\left(  M;B^{n}\right)  _{\omega}+I\left(  M;K|B^{n}\right)  _{\omega
}+n\varepsilon^{\prime}\\
&  \leq I\left(  M;B^{n}\right)  _{\omega}-I_{\text{acc}}\left(
M;E^{n}\right)  _{\omega}+o\left(  n\right)  +n\varepsilon^{\prime
}+n\varepsilon''\\
&  \leq L_{W}^{(u)}\left(  \mathcal{N}^{\otimes n}\right)  +o\left(  n\right)
+n\varepsilon^{\prime}+n\varepsilon''.
\end{align*}
The first equality follows from the assumption that the random variable $M$ is
a uniform random variable. The second equality is an identity because
$H(M|\hat{M})=0$ for the ideal distribution on $M$ and $\hat{M}$. The first
inequality follows from an application of the Alicki-Fannes-Audenart
inequality (continuity of entropy) \cite{AF04,A07}, where $\varepsilon
^{\prime}$ is a function of $\varepsilon$ that approaches zero as
$\varepsilon\rightarrow0$. The second inequality follows from an application
of quantum data processing (both $B^{n}$ and $K$ are fed into the decoder to
produce $\hat{M}$). The third equality follows from an application of the
chain rule for mutual information. The third inequality follows from the upper
bound
\[
I\left(  M;K|B^{n}\right)  \leq H\left(  K|B^{n}\right)  \leq H\left(
K\right)  \leq o\left(  n\right)  ,
\]
(the assumption that the secret key rate is sublinear) and from the accessible
information bound $I_{\text{acc}}\left(  M;E^{n}\right)  \leq\varepsilon$. The final
inequality follows from optimizing over all distributions, so that we have%
\[
R\leq\lim_{n\rightarrow\infty}\frac{1}{n}L_{W}^{\left(  u\right)  }\left(
\mathcal{N}^{\otimes n}\right)  .
\]
in the limit as $n$ becomes large and as $\varepsilon\rightarrow0$.
\end{proof}

\begin{theorem}
\label{thm:up-bnd-strong-locking-cap}The strong locking capacity $L_{S}\left(
\mathcal{N}\right)  $ of a quantum channel $\mathcal{N}$ is upper bounded as%
\[
L_{S}\left(  \mathcal{N}\right)  \leq\lim_{n\rightarrow\infty}\frac{1}{n}%
L_{S}^{\left(  u\right)  }\left(  \mathcal{N}^{\otimes n}\right)  ,
\]
where%
\[
L_{S}^{\left(  u\right)  }\left(  \mathcal{N}\right)  \equiv\max_{\left\{
p\left(  x\right)  ,\rho_{x}\right\}  } \left[I\left(  X;B\right)
-I_{\operatorname{acc}}\left(  X;BE\right) \right] ,
\]
and the information quantities are with respect to the state in
(\ref{eq:locking-cap-up-bnd-state}).
\end{theorem}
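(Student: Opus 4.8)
The plan is to transcribe the proof of Theorem~\ref{thm:WLC} almost verbatim, the only essential change being that the adversary's accessible information is now evaluated on the joint system $B^nE^n$ rather than on $E^n$ alone. First I would reduce the message-transmission problem to that of generating shared locked randomness, since any upper bound on achievable rates for the latter task also bounds the former. I would then reuse the intermediate state $\omega_{MKB^n}$ and the ideal state $\overline{\Phi}_{M\hat{M}}$ from the proof of Theorem~\ref{thm:WLC}, run Bob's decoder to produce $\hat{M}$, and invoke the reliability requirement so that $\Vert\overline{\Phi}_{M\hat{M}}-\omega'_{M\hat{M}}\Vert_1\leq\varepsilon$.

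The observation specific to the strong setting is that Eve's access to the channel inputs $A^n$ is equivalent to access to $B^nE^n$: because $U^{\mathcal{N}}_{A\rightarrow BE}$ is an isometry, any measurement on $A^n$ can be simulated by one on $B^nE^n$ and conversely, whence $I_{\operatorname{acc}}(M;A^n)=I_{\operatorname{acc}}(M;B^nE^n)$. The strong security criterion of Definition~\ref{def:strong-lock-protocol}, together with Remark~\ref{rem:FA-acc-bound}, then supplies the bound $I_{\operatorname{acc}}(M;B^nE^n)\leq\varepsilon''n$, where $\varepsilon''\equiv h_2(\varepsilon/2)/n+\varepsilon R/2$.

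The main chain of inequalities proceeds exactly as in Theorem~\ref{thm:WLC}: from $nR=H(M)_{\overline{\Phi}}=I(M;\hat{M})_{\overline{\Phi}}$ I would apply continuity of entropy, data processing through the decoder, and the chain rule to reach $I(M;B^n)_\omega+I(M;K|B^n)_\omega$ up to a correction $n\varepsilon'$ that vanishes as $\varepsilon\rightarrow0$. Bounding $I(M;K|B^n)\leq H(K)\leq o(n)$ by the sublinear-key assumption and decomposing $I(M;B^n)=[I(M;B^n)-I_{\operatorname{acc}}(M;B^nE^n)]+I_{\operatorname{acc}}(M;B^nE^n)$, I would absorb the second summand into the error terms via the strong security bound. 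Tracing out $K$ from the isometric extension of $\omega$ leaves a classical-quantum state on $M$ and $B^nE^n$ of exactly the form in~(\ref{eq:locking-cap-up-bnd-state}), so maximizing over input ensembles shows the bracketed quantity is at most $L_S^{(u)}(\mathcal{N}^{\otimes n})$. Dividing by $n$ and taking $\varepsilon\rightarrow0$ and $n\rightarrow\infty$ gives the claimed regularized bound.

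Because the argument is a direct adaptation of the weak-locking proof, I do not expect a genuine obstacle. The only point demanding care is verifying the equivalence $I_{\operatorname{acc}}(M;A^n)=I_{\operatorname{acc}}(M;B^nE^n)$ and confirming that the definition of $L_S^{(u)}$, which subtracts $I_{\operatorname{acc}}(X;BE)$, is matched precisely by the quantity that the strong security criterion controls.
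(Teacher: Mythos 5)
Your proposal is correct and follows exactly the route the paper takes: the paper's own proof simply says it is "nearly identical" to that of Theorem~\ref{thm:WLC}, with the substitution $I_{\operatorname{acc}}(M;A^{n})=I_{\operatorname{acc}}(M;B^{n}E^{n})$ (justified by the isometric invariance of accessible information, as you note) and the bound from Remark~\ref{rem:FA-acc-bound} applied to the strong security criterion. Your fleshed-out chain of inequalities is precisely the intended instantiation of that sketch, so there is nothing to add.
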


\begin{proof}
The proof of this theorem is nearly identical to the proof of the one above.
However, we employ the bound on the
accessible information $I_{\text{acc}}\left(  M;A^{n}\right)
=I_{\text{acc}}\left(  M;B^{n}E^{n}\right)$ from
Definition~\ref{def:strong-lock-protocol} and Remark~\ref{rem:FA-acc-bound} instead.
\end{proof}

\begin{remark}
Observe that the bounds in the above theorem hold even if the key is allowed
to be a sublinear size quantum system, as in the locking schemes discussed in
\cite{DFHL10}.
\end{remark}

It is an interesting and important open question to determine if the upper
bounds given in the above theorems are achievable.

\subsubsection{Entanglement-breaking channels have zero locking capacity}

The above theorems and a further analysis allow us to determine that both the
strong and weak locking capacities of an entanglement-breaking channel are
equal to zero.

\begin{definition}
[Entanglement-breaking channel \cite{HSR03}]A channel $\mathcal{N}%
_{\operatorname{EB}}$ is entanglement-breaking if the output state is
separable whenever it acts on one share of an entangled state:%
\[
\left(  \operatorname{id}_{R}\otimes\mathcal{N}_{\operatorname{EB}}\right)
\left(  \rho_{RA}\right)  =\sum_{x}p_{X}\left(  x\right)  \sigma_{R}%
^{x}\otimes\omega_{B}^{x},
\]
where $p_{X}\left(  x\right)  $ is a probability distribution, each
$\sigma_{R}^{x}$ is a state on the reference system $R$, and each $\omega
_{B}^{x}$ is a state on the channel output system $B$.
\end{definition}

\begin{theorem}
Both the strong and weak locking capacities of an entanglement-breaking
channel $\mathcal{N}_{\operatorname{EB}}$ are equal to zero:%
\[
L_{W}(\mathcal{N}_{\operatorname{EB}})=L_{S}(\mathcal{N}_{\operatorname{EB}%
})=0.
\]
\label{thm:ent-break}
\end{theorem}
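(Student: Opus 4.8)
The plan is to reduce to the weak locking upper bound of Theorem~\ref{thm:WLC} and then exploit the measure-and-prepare structure of entanglement-breaking channels. Since every strong locking protocol is also a weak locking protocol (Remark~\ref{rem:strong-implies-weak-lock}), we have $L_S(\mathcal{N}_{\operatorname{EB}}) \le L_W(\mathcal{N}_{\operatorname{EB}})$, so it suffices to show that $L_W(\mathcal{N}_{\operatorname{EB}}) = 0$. By Theorem~\ref{thm:WLC}, $L_W(\mathcal{N}_{\operatorname{EB}}) \le \lim_{n\to\infty} \frac{1}{n} L_W^{(u)}(\mathcal{N}_{\operatorname{EB}}^{\otimes n})$, and since a tensor power of entanglement-breaking channels is again entanglement-breaking, it is enough to prove the single-letter bound $L_W^{(u)}(\mathcal{N}_{\operatorname{EB}}) \le 0$ for an arbitrary entanglement-breaking channel; that is, that $I(X;B) - I_{\operatorname{acc}}(X;E) \le 0$ for every ensemble $\{p(x),\rho_x\}$ and the associated state in~(\ref{eq:locking-cap-up-bnd-state}).

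First I would invoke the structure theorem for entanglement-breaking channels~\cite{HSR03}, which lets us write $\mathcal{N}_{\operatorname{EB}}(\rho) = \sum_y \operatorname{Tr}\{M_y \rho\}\, \tau_y$ for some POVM $\{M_y\}$ and output states $\{\tau_y\}$. The channel thus factors as a measurement $\mathcal{M}: \rho \mapsto \sum_y \operatorname{Tr}\{M_y\rho\}\,|y\rangle\langle y|_Y$ followed by a preparation $\mathcal{P}: |y\rangle\langle y|_Y \mapsto \tau_y$. The key step is to build an isometric extension $U^{\mathcal{N}}_{A\to BE}$ that keeps a classical copy of the measurement outcome $y$ in the environment $E$: one dilates $\mathcal{M}$ so that both its output and its environment carry $|y\rangle$, and then dilates $\mathcal{P}$ acting only on the output copy of $Y$, leaving $E$ in possession of a register $Y'$ holding $y$ together with the purifying system of $\tau_y$.

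With this extension, when Alice inputs $\rho_x$ the classical register $Y'$ inside $E$ is correlated with $X$ exactly according to the joint distribution $p(x,y) = p(x)\operatorname{Tr}\{M_y\rho_x\}$. Measuring $Y'$ in the computational basis is one particular measurement of $E$, so $I_{\operatorname{acc}}(X;E) \ge I(X;Y)$, where $I(X;Y)$ is the classical mutual information of this distribution. On the other hand, the output state $\rho_B^x = \sum_y \operatorname{Tr}\{M_y\rho_x\}\,\tau_y$ is obtained from $Y$ by applying the fixed preparation map $\mathcal{P}$, so $X\to Y\to B$ is a Markov chain and the data-processing inequality gives $I(X;B) \le I(X;Y)$. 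Combining the two bounds yields $I(X;B) - I_{\operatorname{acc}}(X;E) \le I(X;Y) - I(X;Y) = 0$ for every ensemble, hence $L_W^{(u)}(\mathcal{N}_{\operatorname{EB}}) \le 0$, and therefore $L_W(\mathcal{N}_{\operatorname{EB}}) = L_S(\mathcal{N}_{\operatorname{EB}}) = 0$.

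The main obstacle is the middle step: verifying that some isometric extension of $\mathcal{N}_{\operatorname{EB}}$ genuinely retains an \emph{accessible} classical copy of $y$ in the environment. This follows from the measure-and-prepare decomposition---all isometric extensions are related by an isometry acting on $E$, so it suffices to exhibit one with this property---but it should be argued carefully, for instance by choosing mutually orthogonal purifications $|\phi_y\rangle_{BE'}$ of the $\tau_y$ and taking the environment to be the join of the measurement-outcome copy $Y'$ and $E'$. Everything else is a direct application of Theorem~\ref{thm:WLC}, monotonicity of accessible information under discarding systems, and the quantum data-processing inequality.
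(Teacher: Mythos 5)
Your proof is correct and follows essentially the same route as the paper's: both apply the upper bound of Theorem~\ref{thm:WLC} and exploit the measure-and-prepare structure of entanglement-breaking channels so that the environment holds an accessible classical copy of the measurement outcome while Bob's output is a preparation from that outcome, giving $I_{\operatorname{acc}}(X;E^n) \ge I(X;Y^n) \ge I(X;B^n)$ and hence a nonpositive regularized bound. The only cosmetic differences are that the paper uses the rank-one Kraus representation of Ref.~\cite{HSR03}, which makes the isometric extension $\sum_y \vert\phi_y\rangle_B\langle\psi_y\vert_A\otimes\vert y\rangle_E$ immediate, and argues directly at blocklength $n$ with tensor-product measurements, whereas you work with the general POVM form (constructing the extension by dilating measurement and preparation separately) and single-letterize via the closure of entanglement-breaking channels under tensor products; both handle the key technical point correctly.
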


\begin{proof}
The proof of this theorem exploits the upper bound derived in
Theorem~\ref{thm:WLC} and the fact that $L_{W}(\mathcal{N}_{\text{EB}})\geq
L_{S}(\mathcal{N}_{\text{EB}})$. We know from Ref.~\cite{HSR03} that any
entanglement-breaking channel has a representation with rank-one Kraus
operators, so that its action on an input density operator is given by%
\[
\mathcal{N}_{\text{EB}}\left(  \rho\right)  =\sum_{y}\left\vert \phi
_{y}\right\rangle _{B}\left\langle \psi_{y}\right\vert _{A}\rho\left\vert
\psi_{y}\right\rangle _{A}\left\langle \phi_{y}\right\vert _{B},
\]
for some set of vectors $\left\{  \left\vert \psi_{y}\right\rangle
_{A}\right\}  $ such that $\sum_{y}\left\vert \psi_{y}\right\rangle
\left\langle \psi_{y}\right\vert _{A}=I_A$ and a set of states $\left\{
\left\vert \phi_{y}\right\rangle _{B}\right\}  $. An isometric extension of
the channel is then given by%
\[
U_{A\rightarrow BE}^{\mathcal{N}_{\text{EB}}}\equiv\sum_{y}\left\vert \phi
_{y}\right\rangle _{B}\left\langle \psi_{y}\right\vert _{A}\otimes\left\vert
y\right\rangle _{E},
\]
with $\left\{  \left\vert y\right\rangle _{E}\right\}  $ an orthonormal basis
for the environment. From this representation, it is clear that the channel to
the environment is of the form:%
\[
\mathcal{N}_{\text{EB}}^{c}\left(  \rho\right)  =\sum_{y,z}\left\langle
\psi_{y}\right\vert \rho\left\vert \psi_{z}\right\rangle _{A} \ \left\langle
\phi_{z}|\phi_{y}\right\rangle _{B}\ \left\vert y\right\rangle \left\langle
z\right\vert _{E},
\]
and the environment can simulate the channel to the receiver by first
performing a von Neumann measurement in the basis $\left\{  \left\vert
y\right\rangle \right\}  $ followed by a preparation of the state $\left\vert
\phi_{y}\right\rangle _{B}$ conditioned on the measurement outcome being $y$.

Now consider the information quantity $L_{W}^{\left(  u\right)  }%
(\mathcal{N}_{\text{EB}})$ defined in (\ref{eq:weak-lock-upper-quantity}).
Theorem~\ref{thm:WLC} states that the regularization of this quantity is an
upper bound on the weak locking capacity. For any finite $n$, we can always
pick the measurement to be a tensor-product von Neumann measurement of the
form mentioned above, giving that%
\[
I_{\text{acc}}\left(  X;E^{n}\right)  \geq I\left(  X;Y^{n}\right)  ,
\]
where $Y^{n}$ is the random variable corresponding to the measurement
outcomes. Due to the structural relationship given above (the fact that the
environment can simulate the channel to the receiver by preparing $n$ quantum
states $\left\vert \phi_{y_{1}}\right\rangle \otimes\cdots\otimes\left\vert
\phi_{y_{n}}\right\rangle $ from the measurement outcomes $y^{n}$), we find
that%
\[
I\left(  X;Y^{n}\right)  \geq I\left(  X;B^{n}\right)  ,
\]
by an application of the quantum data processing inequality. This is
equivalent to $I\left(  X;B^{n}\right)  -I\left(  X;Y^{n}\right)  \leq0$,
which implies that $\lim_{n\rightarrow\infty}\frac{1}{n}L_{W}^{\left(
u\right)  }\left(  \mathcal{N}_{\text{EB}}^{\otimes n}\right)  = 0$ and thus
that the weak locking capacity vanishes for any entanglement-breaking channel.
\end{proof}

\begin{remark}
The importance of the above theorem is the conclusion that a channel should be
able to preserve entanglement between a purification of the channel input and
its output in order for it to be able to lock information. If it is not able
to (i.e., if it is entanglement-breaking), then the locking capacity is equal
to zero. Ref.~\cite{Boixo} suggested that entanglement does not play a role in quantum data locking,
but this theorem shows that it does in any realistic implementation of a locking protocol.
\end{remark}

\begin{remark}
It should be possible to provide a rigorous generalization of 
this result to entanglement-breaking channels
defined over general infinite-dimensional spaces using the techniques from Ref.~\cite{H08}.
For example, it is known that a lossy bosonic channel becomes entanglement-breaking 
when the environment injects a thermal state with sufficiently high photon number \cite{H08}.
However, we leave this question open for future work.
\end{remark}

\subsubsection{Protocols with classical simulations have zero strong locking
rate}

It is important to determine the conditions for when the locking rate of
a given protocol is 
zero, so that we can distinguish between the classical
and quantum regimes for locking. In this regard, we can exclude all protocols
that have a classical simulation in the following sense:

\begin{definition}
[Classical simulation]We say that a locking protocol has a classical
simulation if the receiver's decoding consists of performing a measurement on
the output of the channel that is independent of the key $K$, followed by a
classical post-processing of the measurement output and the key to produce an
estimate of the transmitted message.
\end{definition}

\begin{theorem}
\label{thm:classical-sim}The strong locking rate of 
any locking protocol with a classical simulation is equal to zero.
\end{theorem}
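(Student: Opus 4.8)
The plan is to exploit the defining feature of a classical simulation---namely that the receiver's measurement on $B^n$ does not depend on the key---to show that the eavesdropper in the strong setting can reproduce that measurement herself. Write $Y$ for the classical outcome of Bob's key-independent POVM applied to $B^n$, so that his estimate is a function $\hat{M}=f(Y,K)$ of this outcome and the key. Since Eve holds the $A^n$ systems in a strong locking protocol, she can first apply the channel $\mathcal{N}_{A\rightarrow B}^{\otimes n}$ (a valid CPTP map) to her systems and then perform exactly Bob's POVM; the composite is a legitimate measurement on $A^n$ whose outcome has the same distribution as $Y$. Consequently $I(M;Y)\leq I_{\mathrm{acc}}(M;A^n)$, and by the security criterion together with the strong form of Remark~\ref{rem:FA-acc-bound} we have $I_{\mathrm{acc}}(M;A^n)\leq h_2(\varepsilon/2)+\varepsilon nR/2$. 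Thus $I(M;Y)$ is forced to be small.

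First I would then bound the rate exactly as in the proof of Theorem~\ref{thm:WLC}, but routing the information through $Y$ rather than through $B^n K$. Starting from $nR=H(M)=I(M;\hat{M})+H(M|\hat{M})$, I would invoke Fano's inequality to control $H(M|\hat{M})$ by $h_2(\varepsilon/2)+\varepsilon nR/2$, using the decoding guarantee $\Pr\{\hat{M}\neq M\}\leq\varepsilon/2$. Since $\hat{M}=f(Y,K)$, data processing gives $I(M;\hat{M})\leq I(M;YK)=I(M;Y)+I(M;K\mid Y)$, and the conditional term is at most $H(K)=o(n)$ by the sublinear-key assumption. Combining these with the bound on $I(M;Y)$ yields
\[
nR\leq\big[h_2(\varepsilon/2)+\varepsilon nR/2\big]+o(n)+h_2(\varepsilon/2)+\varepsilon nR/2,
\]
so that $nR(1-\varepsilon)\leq 2h_2(\varepsilon/2)+o(n)$. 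Dividing by $n$ and sending $n\rightarrow\infty$ and then $\varepsilon\rightarrow 0$ forces $R\rightarrow 0$, which is the claim.

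The step I expect to be the main obstacle is the rigorous justification that Eve's simulation yields $I(M;Y)\leq I_{\mathrm{acc}}(M;A^n)$. The point to be careful about is that Eve does not literally receive $B^n$; she holds $A^n$ and must generate a surrogate for $B^n$ by applying the channel herself. Because $\mathcal{N}_{A\rightarrow B}$ is fixed and Bob's POVM is key-independent, the composition ``channel followed by Bob's POVM'' is a well-defined measurement map on $A^n$, and the induced classical variable is statistically identical to $Y$; hence it is one admissible strategy in the maximization defining $I_{\mathrm{acc}}(M;A^n)$, which gives the inequality. Everything else is an application of continuity of entropy (Fano) and data processing, exactly paralleling the argument already used for Theorem~\ref{thm:WLC}.
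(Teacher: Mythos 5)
Your proposal is correct and takes essentially the same route as the paper's own proof: the central step---that Eve, holding $A^n$, can apply $\mathcal{N}_{A\rightarrow B}^{\otimes n}$ herself and then run Bob's key-independent POVM, so that $I(M;Y)\leq I_{\mathrm{acc}}(M;A^n)=I_{\mathrm{acc}}(M;B^nE^n)$, which the strong security criterion forces to be $O(h_2(\varepsilon/2)+\varepsilon nR)$---is exactly the paper's argument, as is the chain $nR \lesssim I(M;YK)=I(M;Y)+I(M;K|Y)\leq I(M;Y)+o(n)$. The only cosmetic differences are that you invoke Fano's inequality to handle the decoding error where the paper uses Alicki-Fannes continuity, and that you phrase Eve's simulation as applying the channel to $A^n$ rather than discarding $E^n$ from $B^nE^n$; these are equivalent.
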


\begin{proof}
The fact that this theorem should hold might be obvious, but nevertheless we
provide a proof. The setup for this proof is similar to that in the proof of
Theorems~\ref{thm:WLC} and \ref{thm:up-bnd-strong-locking-cap}, with the
exception that the decoder first performs a key-independent measurement of the
channel output to produce a random variable $Y$. The decoder then processes
the random variable $Y$ and the key $K$ to produce an estimate $\hat{M}$ of
the sender's message. We can bound the rate $R$ of this protocol as follows:%
\begin{align*}
nR &  =H\left(  M\right)  _{\overline{\Phi}}\\
&  =I(M;\hat{M})_{\overline{\Phi}}\\
&  \leq I(M;\hat{M})_{\omega^{\prime}}+n\varepsilon^{\prime}\\
&  \leq I\left(  M;YK\right)  _{\omega}+n\varepsilon^{\prime}\\
&  =I\left(  M;Y\right)  _{\omega}+I\left(  M;K|Y\right)  _{\omega
}+n\varepsilon^{\prime}\\
&  \leq I\left(  M;Y\right)  _{\omega}-I_{\text{acc}}\left(  M;B^{n}%
E^{n}\right)  _{\omega}+o\left(  n\right)  +n\varepsilon^{\prime}%
+n\varepsilon''\\
&  \leq I\left(  M;Y\right)  _{\omega}-I\left(  M;Y\right)  _{\omega}+o\left(
n\right)  +n\varepsilon^{\prime}+n\varepsilon''\\
&  =o\left(  n\right)  +n\varepsilon^{\prime}+n\varepsilon''.
\end{align*}
The first three lines above are exactly the same as those in the proof of
Theorem~\ref{thm:WLC}. The second inequality follows from quantum data
processing. The third equality is the chain rule. The third inequality follows
from the condition $I_{\text{acc}}\left(  M;B^{n}E^{n}\right)  _{\omega}%
\leq\varepsilon''n$,
 with
$\varepsilon''\equiv h_2(\varepsilon/2)/n + \varepsilon R/2$, whenever
(\ref{eq:sec-crit-var-dist}) is satisfied,
which should hold for any strong locking protocol. Also, it
follows because $I\left(  M;K|Y\right)  _{\omega}\leq H\left(  K\right)  \leq
o\left(  n\right)  $. Finally, the adversary can choose her processing of the
$B^{n}E^{n}$ systems to be a discarding of $E^{n}$ followed by whatever
key-independent measurement of $B^{n}$ that the receiver is performing to
produce $Y$. Thus, it follows that $I_{\text{acc}}\left(  M;B^{n}E^{n}\right)
_{\omega}\geq I(M;Y)$. The statement that the strong locking rate is equal
to zero follows by taking the limit as $n\rightarrow\infty$ and $\varepsilon
\rightarrow0$.
\end{proof}

As a corollary of the above theorem, we find the following:

\begin{corollary}
If a protocol does not consume any secret key at all, then the strong locking rate is equal to zero.
\end{corollary}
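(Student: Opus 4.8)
The plan is to show that a protocol consuming no secret key is automatically a protocol admitting a classical simulation, at which point Theorem~\ref{thm:classical-sim} applies verbatim and yields the conclusion. First I would observe that in the absence of a key, the decoding map reduces to $\mathcal{D}_{B^{n}\rightarrow\hat{M}}$, acting only on the channel output systems $B^{n}$ and producing the classical register $\hat{M}$. Since this map outputs a purely classical system, it is by definition a quantum-to-classical channel, i.e., it is itself a measurement (a POVM) on $B^{n}$ whose outcome is the estimate $\hat{M}$.

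Next I would verify that this keyless decoding fits the definition of a classical simulation. Identifying the measurement outcome random variable $Y$ with $\hat{M}$ and taking the subsequent classical post-processing to be the identity, the two requirements are met trivially: the measurement on $B^{n}$ is independent of the key precisely because there is no key, and the post-processing of the measurement output (together with the empty key register) simply returns $\hat{M}$. Thus every keyless protocol is a special case of a protocol with a classical simulation.

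Finally I would invoke Theorem~\ref{thm:classical-sim} directly, which asserts that any locking protocol admitting a classical simulation has strong locking rate equal to zero; hence the same holds for any protocol that consumes no secret key. The only step requiring a moment's attention is the recognition that a decoder producing a classical estimate is necessarily a measurement followed by (trivial) post-processing, but this is essentially definitional for a quantum-to-classical map, so I do not anticipate any real obstacle. All the mathematical content resides in Theorem~\ref{thm:classical-sim}, and the corollary amounts to the remark that ``no key'' is simply the extreme case of ``key-independent decoding.''
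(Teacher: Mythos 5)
Your proposal is correct and follows essentially the same route as the paper: the paper's own proof likewise observes that with no key the receiver's measurement on the channel outputs is automatically key-independent, so the protocol has a classical simulation and Theorem~\ref{thm:classical-sim} applies. Your additional elaboration---that a decoder producing a classical estimate is a measurement with $Y=\hat{M}$ and trivial post-processing---just makes explicit what the paper leaves as an obvious remark.
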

\begin{proof}
This follows simply because the receiver's measurement on the channel outputs 
does not depend on a key for a scheme that does not use any key at all.
\end{proof}

\subsubsection{The private and quantum capacity are equal to the weak locking capacity for
particular Hadamard channels}

In this section, we prove that if the channel is such that the map
from the input to the environment
is a quantum-to-classical channel, i.e., of the following form:
\begin{equation}
\rho \to \sum_x \text{Tr}\{A_x \rho A_x^\dag\} \, \vert x \rangle \langle x \vert,
\label{eq:q-c-env}
\end{equation}
for some orthonormal basis $\{ \vert x \rangle \}$ and
where $\sum_x A_x^\dag A_x = I$, then the weak locking capacity of such a channel
is equal to its private and quantum capacity. This result follows simply because the systems
received by the environment are already classical, so that the best measurement
for the adversary to perform is given by $\{\vert x \rangle \langle x \vert \}$
on each channel use. Any measurement other
than this one will have
a mutual information with the message lower than
this measurement's mutual information by a simple data processing argument. Furthermore, since
the systems given to the environment are classical, the Holevo information of the environment
 with the input is
equal to the accessible information of the environment with the input for such channels. 

For such channels, the map from the
input to the output is of the following form:
\begin{equation}
\rho \to \sum_x A_x \rho A_x^\dag \otimes \vert x \rangle \langle x \vert
\label{eq:Hadamard-qc-env}
\end{equation}
because the operator $\sum_x A_x (\cdot) \otimes \vert x \rangle \otimes \vert x \rangle$
is an isometric extension of the channel in (\ref{eq:q-c-env}).
A notable example of such a channel is the ``photon detected-jump'' channel, described
in Ref.~\cite{GJWZ10}. Channels of the form in (\ref{eq:Hadamard-qc-env})
 are examples of Hadamard channels,
which are generally defined as channels complementary to entanglement-breaking
ones \cite{K03had,KMNR07}.

We state the above result as the following theorem:
\begin{theorem}
\label{thm:Hadamard-qc-env}The weak locking capacity of a channel of the
form in (\ref{eq:Hadamard-qc-env}) is equal to its private and quantum capacity and is given by the following expression:
$$
\max_{\left\{  p_{X}\left(
x\right)  ,\rho_{x}\right\}  }\left[  I\left(  X;B\right)  -I\left(
X;E\right)  \right], 
$$
where the information quantities are evaluated with respect to the following state:
$$
\sum_x p_X(x) \vert x \rangle \langle x \vert_X \otimes U^{\mathcal{N}}_{A\to BE}(\rho_x)
$$
with $U^{\mathcal{N}}_{A\to BE}$ an isometric extension of the channel $\mathcal{N}$.
\end{theorem}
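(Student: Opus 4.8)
The plan is to sandwich the weak locking capacity between a lower bound coming from the operational inequalities~(\ref{eq:weak-lock-bounds}) and an upper bound coming from Theorem~\ref{thm:WLC}, and then to use the special structure of the channel to collapse both to the single-letter formula in the statement. The lower bound is immediate: (\ref{eq:weak-lock-bounds}) gives $P(\mathcal{N}) \leq L_W(\mathcal{N})$ for every channel, so it remains only to argue that the upper bound matches the private capacity and that the private capacity equals the claimed expression.

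For the upper bound, I would begin from Theorem~\ref{thm:WLC}, namely $L_W(\mathcal{N}) \leq \lim_{n\to\infty}\frac1n L_W^{(u)}(\mathcal{N}^{\otimes n})$ with $L_W^{(u)} = \max_{\{p(x),\rho_x\}}[I(X;B) - I_{\operatorname{acc}}(X;E)]$. The decisive observation is that the complementary channel~(\ref{eq:q-c-env}), and likewise each of its tensor powers, sends every input to a state that is diagonal in the fixed product basis $\{|x^n\rangle\}$. Consequently, for any ensemble $\{p(x),\rho_x\}$ on $A^n$, the induced states on $E^n$ all commute, and for a commuting ensemble the accessible information equals the Holevo information: $I_{\operatorname{acc}}(X;E^n) = I(X;E^n)$, with the optimal measurement being the von Neumann measurement in that common eigenbasis. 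Substituting this identity gives $L_W^{(u)}(\mathcal{N}^{\otimes n}) = \max[I(X;B^n) - I(X;E^n)]$, which is exactly the $n$-letter private information of $\mathcal{N}^{\otimes n}$; its regularization is the private capacity, so the upper bound becomes $L_W(\mathcal{N}) \leq P(\mathcal{N})$. Combined with the lower bound, this yields $L_W(\mathcal{N}) = P(\mathcal{N})$.

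It then remains to single-letterize and to identify the quantum capacity. A short computation on the Stinespring isometry $U|\psi\rangle = \sum_x (A_x|\psi\rangle)_{B'}\,|x\rangle_{X'}\,|x\rangle_E$ shows that the channel output $B = B'X'$ carries a classical copy of the label $x$ in the register $X'$; hence the map that measures $X'$ in the basis $\{|x\rangle\}$ and re-prepares $|x\rangle_E$ degrades the output to the environment. The channel is therefore degradable (equivalently, a Hadamard channel, being complementary to the entanglement-breaking map~(\ref{eq:q-c-env})). For degradable channels the private information $\max[I(X;B)-I(X;E)]$ is additive, so $P(\mathcal{N})$ equals the single-letter quantity in the statement, and the private and quantum capacities coincide, so $P(\mathcal{N}) = Q(\mathcal{N})$. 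This closes the chain $Q(\mathcal{N}) = P(\mathcal{N}) = L_W(\mathcal{N}) = \max[I(X;B)-I(X;E)]$.

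The step I expect to require the most care is the regularized accessible-information identity: one must verify that $I_{\operatorname{acc}}(X;E^n) = I(X;E^n)$ holds even for entangled inputs $\rho_x$ on $A^n$ (it does, because the commuting property of the environment output is inherited by the tensor-power channel regardless of the input), and then pair this with additivity of the private information for degradable channels so that no regularization survives in the final single-letter formula.
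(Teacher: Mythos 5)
Your proposal is correct and follows essentially the same route as the paper's proof: the upper bound from Theorem~\ref{thm:WLC} collapses because the environment outputs are all diagonal in a fixed (product) basis, giving $I_{\operatorname{acc}}(X;E^n)=I(X;E^n)$ and hence $L_W(\mathcal{N})\leq P(\mathcal{N})$, which combined with the operational bound $P(\mathcal{N})\leq L_W(\mathcal{N})$ and degradability (single-letterization of the private information and $P=Q$) yields the claim. The only cosmetic differences are that you make the lower bound from (\ref{eq:weak-lock-bounds}) explicit and construct the degrading map directly (measure the classical register $X'$ and re-prepare $|x\rangle_E$), whereas the paper cites the general fact that Hadamard channels are degradable.
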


\begin{proof}
A proof of this theorem follows the intuition mentioned above. In particular, we know from
Refs.~\cite{ieee2005dev,1050633} that the following formula is equal to the private capacity of any channel:
$$
P\left(  \mathcal{N}\right)
=\lim_{n\rightarrow\infty}\frac{1}{n}\left[  \max_{\left\{  p_{X}\left(
x\right)  ,\rho^{(n)}_{x}\right\}  }\left[  I\left(  X;B^{n}\right)  -I\left(
X;E^{n}\right)  \right]  \right]  
$$
Now, since we are assuming the channel to
the environment to have the form given in (\ref{eq:q-c-env}),
the systems given to the environment are classical so that the accessible information
$I_\text{acc}(X;E^n)$ is equal to the Holevo information  $I\left(
X;E^{n}\right)$ for any finite $n$. Thus, our upper bound from Theorem~\ref{thm:WLC}
on the weak locking capacity
of such a channel
is equal to the expression given above for its private capacity. Furthermore, all Hadamard channels
are degradable \cite{BHTW10}, meaning that the receiver can simulate the map from the input to the
environment by acting with a degrading map on his system. Finally, it is known that the
expression for the private capacity ``single-letterizes'' to the form in the statement of the theorem
for degradable channels and that the quantum capacity is equal to the private capacity for
such channels \cite{S08}.
\end{proof}

\begin{remark}
Theorem~\ref{thm:Hadamard-qc-env}
demonstrates that it suffices to use a private capacity achieving code
for channels of the form in (\ref{eq:Hadamard-qc-env}), with the benefit that
these private communication
codes do not require
the consumption of any secret key.
That is, there is no need to devise an exotic information locking protocol for such
channels in order to achieve their weak locking capacity.
\end{remark}

\subsubsection{Quantum discord-based upper bound on the gap between weak
locking capacity and private capacity}

The quantum discord is an asymmetric measure that quantifies the quantum
correlation in a bipartite quantum state \cite{HZ01}. For a given bipartite
quantum state $\rho_{AB}$, the quantum mutual information $I\left(
A;B\right)  _{\rho}$ quantifies all of the bipartite correlations in
$\rho_{AB}$, while $\max_{\Lambda_{A\rightarrow X}}I\left(  X;B\right)  $ is
meant to capture the classical correlations in the state that are recoverable
by performing a local measurement on the $A$ system \cite{HV01}. Thus, the
idea behind the quantum discord $D\left(  A,B\right)  _{\rho}$\ is to quantify
the quantum correlations in a state by subtracting out the classical
correlation from the total correlation:%
\[
D\left(  A,B\right)  _{\rho}\equiv I\left(  A;B\right)  _{\rho}-\max
_{\Lambda_{A\rightarrow X}}I\left(  X;B\right)  .
\]
Ollivier and Zurek originally described the quantum discord as the correlations lost during
a measurement process \cite{HZ01}.

Our upper bound on the weak locking capacity from\ Theorem~\ref{thm:WLC}%
\ appears similar to the above formula for quantum discord. Indeed, we can
place an upper bound on the gap between the weak locking capacity and the
private capacity of a quantum channel in terms of the discord between the
environment of the channel and the classical variable sent into the channel.
We can also interpret this merely as the gap between the Holevo information of
the environment and its accessible information. It is clear why this gap is
related to quantum discord. In a private communication protocol, the security guarantee
is with respect to the Holevo information, while in a locking protocol, the guarantee
is with respect to the accessible information. Thus, the gap between the two capacities
should be related to the correlations lost during Eve's measurement.

\begin{proposition}
The gap between the weak locking capacity and the private capacity of a
quantum channel is no larger than%
\begin{align*}
L_{W}\left(  \mathcal{N}\right)  -P\left(  \mathcal{N}\right) 
& \leq
\lim_{n\rightarrow\infty}\frac{1}{n}\left[  \max_{\left\{  p_{X}\left(
x\right)  ,\rho_{x}\right\}  }I\left(  X;E^{n}\right)  -I_{\operatorname{acc}%
}\left(  X;E^{n}\right)  \right]   \\
& = \lim_{n\rightarrow\infty}\frac{1}{n}\left[  \max_{\left\{  p_{X}\left(
x\right)  ,\rho_{x}\right\}  } D(E^n,X) \right] ,
\end{align*}
where the entropies for any finite $n$ are with respect to a state of the
following form:%
\[
\sum_{x}p_{X}\left(  x\right)  \left\vert x\right\rangle \left\langle
x\right\vert _{X}\otimes\mathcal{N}_{A\rightarrow E}^{\otimes n}\left(
\rho_{x}\right)  ,
\]
and $\mathcal{N}_{A\rightarrow E}$ is the channel complementary to
$\mathcal{N}_{A\rightarrow B}=\mathcal{N}$.
\end{proposition}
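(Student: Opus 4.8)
The plan is to combine the upper bound on $L_W(\mathcal{N})$ from Theorem~\ref{thm:WLC} with the regularized formula for the private capacity, and then to reconcile the two separate maximizations appearing in these quantities. Recall that Theorem~\ref{thm:WLC} gives
$$
L_W(\mathcal{N}) \leq \lim_{n\to\infty}\frac{1}{n}\max_{\{p_X(x),\rho_x\}}\left[I(X;B^n) - I_{\text{acc}}(X;E^n)\right],
$$
while the private capacity admits the regularized characterization from Refs.~\cite{ieee2005dev,1050633},
$$
P(\mathcal{N}) = \lim_{n\to\infty}\frac{1}{n}\max_{\{p_X(x),\rho_x\}}\left[I(X;B^n) - I(X;E^n)\right],
$$
both evaluated with respect to states of the form in (\ref{eq:locking-cap-up-bnd-state}). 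The object we must control is the difference of these two regularized quantities.

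First I would fix $n$ and let $\{p^*_X,\rho^*_x\}$ denote an optimizer of the weak-locking upper bound $L_W^{(u)}(\mathcal{N}^{\otimes n})$. Because this same ensemble is a feasible choice in the private-capacity maximization, I obtain the lower bound $\max_{\{p,\rho\}}[I(X;B^n) - I(X;E^n)] \geq I(X;B^n)_* - I(X;E^n)_*$, where the subscript $*$ denotes evaluation at this optimizer. Subtracting this from $L_W^{(u)}(\mathcal{N}^{\otimes n}) = I(X;B^n)_* - I_{\text{acc}}(X;E^n)_*$, the common term $I(X;B^n)_*$ cancels and leaves
$$
L_W^{(u)}(\mathcal{N}^{\otimes n}) - \max_{\{p,\rho\}}\left[I(X;B^n) - I(X;E^n)\right] \leq I(X;E^n)_* - I_{\text{acc}}(X;E^n)_*.
$$

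Next I would identify the right-hand side as the quantum discord. Since $X$ is classical in (\ref{eq:locking-cap-up-bnd-state}) and $I_{\text{acc}}(X;E^n)$ is by definition the maximum of $I(X;Y)$ over measurements $\Lambda_{E^n\to Y}$ of the environment, the quantity $I(X;E^n) - I_{\text{acc}}(X;E^n)$ equals $D(E^n,X)$ evaluated on the complementary-channel state, with the $E^n$ system playing the role of the measured system in the discord. Bounding $D(E^n,X)_*$ by its maximum over all input ensembles, dividing by $n$, and passing to the limit $n\to\infty$ then yields the claimed inequality, after invoking Theorem~\ref{thm:WLC} for $L_W(\mathcal{N})$ and the regularized formula above for $P(\mathcal{N})$.

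The one point that genuinely requires care is the manipulation of the two separate maximizations: one cannot simply subtract the regularized expressions term by term, since their optimizers generally differ. The device of inserting the weak-locking optimizer as a suboptimal feasible point for the private-capacity problem is exactly what legitimizes the cancellation of $I(X;B^n)$ and produces precisely the discord term. A secondary, more routine technical matter is the passage to the limit: the displayed inequality holds for every finite $n$, and since the limits defining the regularization of $L_W^{(u)}$ and the private capacity $P(\mathcal{N})$ exist, the difference of limits is bounded above by the (lim sup of the) discord term, which is the regularized discord appearing in the statement.
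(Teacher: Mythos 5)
Your proof is correct and follows essentially the same route as the paper's: the paper writes, for an arbitrary ensemble, the identity $I(X;B^n) - I_{\operatorname{acc}}(X;E^n) = [I(X;B^n) - I(X;E^n)] + [I(X;E^n) - I_{\operatorname{acc}}(X;E^n)]$ and bounds each bracket by its maximum, which is algebraically the same as your device of evaluating at the weak-locking optimizer and using it as a feasible point for the private-information maximization. Your treatment of the limit passage is, if anything, slightly more careful than the paper's, but the substance is identical.
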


\begin{proof}
Consider that for any finite $n$, we have the bound%
\begin{align*}
I\left(  X;B^{n}\right)  -I_{\text{acc}}\left(  X;E^{n}\right)   &  =I\left(
X;B^{n}\right)  -I\left(  X;E^{n}\right)  +I\left(  X;E^{n}\right)
-I_{\text{acc}}\left(  X;E^{n}\right) \\
&  \leq\max_{\left\{  p_{X}\left(  x\right)  ,\rho_{x}\right\}  }\left[
I\left(  X;B^{n}\right)  -I\left(  X;E^{n}\right)  \right]  +\max_{\left\{
p_{X}\left(  x\right)  ,\rho_{x}\right\}  }\left[  I\left(  X;E^{n}\right)
-I_{\text{acc}}\left(  X;E^{n}\right)  \right]  .
\end{align*}
Then by using the bound from Theorem~\ref{thm:WLC}, the inequality above, and
the characterization of the private capacity as $P\left(  \mathcal{N}\right)
=\lim_{n\rightarrow\infty}\frac{1}{n}\left[  \max_{\left\{  p_{X}\left(
x\right)  ,\rho_{x}\right\}  }\left[  I\left(  X;B^{n}\right)  -I\left(
X;E^{n}\right)  \right]  \right]  $, the bound in the statement of the theorem follows.
\end{proof}

\subsection{Examples}

\begin{example}
[Noiseless qudit channel]\label{ex:noiseless-qudit}The noiseless qudit channel
trivially has weak locking capacity equal to $\log_{2}d$, where $d$ is the
dimension of the input and output for the channel. The reason for this is that
an isometric extension of this channel has the following form:%
\[
\sum_{i}\left\vert i\right\rangle _{B}\left\langle i\right\vert _{A}%
\otimes\left\vert \phi\right\rangle _{E}.
\]
In this case, Eve's state is independent of the input, so that her accessible
information is always equal to zero (even without coding in any way).\newline%
\newline However, the noiseless qudit channel nontrivially has strong
locking capacity also equal to $\log_{2}d$. This follows from the results of Fawzi
\textit{et al}.~\cite{FHS11}, in which they demonstrated the existence of a
locking protocol that locks $n$ dits using $4  \log_{2}\left(
1/\varepsilon\right)  +O\left(  \log_{2}\log_{2}\left(
1/\varepsilon\right)  \right)  $ bits of key while having the variational distance
in (\ref{eq:sec-crit-var-dist}) for any eavesdropper measurement
no larger than $\varepsilon$, for
an eavesdropper who obtains the full output of the
noiseless channel. 
Thus, this scheme is an $\left(  n,\log_{2}d,\varepsilon\right)  $ locking
protocol that consumes secret key at a rate equal to%
\[
\frac{1}{n}\left[  4  \log_{2}\left(  1/\varepsilon\right)  
+O\left(  \log_{2}\log_{2}\left(  1/\varepsilon\right)  \right)  \right]  .
\]
So, for any fixed $\varepsilon>0$, we can take $n$ large so that the secret
key rate vanishes in this limit, while the eavesdropper will not be able to do much
better than to randomly guess the message.
Thus, this construction gives a scheme to achieve the rate
$\log_{2}d$. Since the strong locking capacity of the noiseless qudit channel
cannot be any larger than $\log_{2}d$, this proves that it is equal to
$\log_{2}d$ for this channel.
\end{example}

In reality, one does not ever have access to perfectly independent uses of a quantum
channel, as this is just an idealization. As such,
it can be helpful to define the ``one-shot'' locking capacity for a
single use of a quantum channel. We provide such a definition below:

\begin{definition}
[One-shot locking capacity]\label{def:one-shot-locking-cap}The $\varepsilon
$-one-shot locking capacity of a quantum channel is the maximum number of
locked bits that a sender can transmit to a receiver such that the receiver
can recover the message with average error probability less than
$\varepsilon>0$ and such that the total variational distance\ of the message
distribution conditioned on the eavesdropper's measurement outcome$~x$ with
the unconditioned message distribution $p_{M}$\ is no larger than
$\varepsilon$:
\[
  \sum_m \vert p_{M|X}(m|x) - p_{M}(m)\vert  \leq\varepsilon.
\]
We also demand that the number of secret key bits used is $O\left(  \log
_{2}\log_{2}\left\vert \mathcal{M}\right\vert \right)  $. Similar to the
IID\ case, we can distinguish between weak and strong locking capacities.
\end{definition}

\begin{example}
[Depolarizing channel]Recall that the quantum depolarizing channel is defined
as%
\[
\rho\rightarrow\left(  1-p\right)  \rho+p\frac{I}{d},
\]
where $p\in\left[  0,1\right]  $ characterizes the noisiness of the channel
and $d$ is its dimension. For sufficiently large $d$, the $\varepsilon
$-one-shot strong locking capacity of the depolarizing channel is equal to its
$\varepsilon$-one-shot classical capacity (defined similarly as above---see
Ref.~\cite{WR12}, for example). This result follows simply because any unitary
encoding commutes with the action of the depolarizing channel on the input
state, and we can employ the FHS\ protocol combined with an $\varepsilon
$-one-shot classical capacity achieving code, in order to achieve the same
$\varepsilon$-one-shot strong locking capacity of the depolarizing channel.
\newline\newline
While easy to prove, this example illustrates the subtle interplay between locking, entanglement and classical communication. The fact that the depolarizing channel's one-shot strong locking and classical capacities match regardless of the strength of the noise would seem to leave little room for quantum correlations to play any role. Indeed, it seems hard to square this result with Theorem~\ref{thm:ent-break}'s statement that entanglement-breaking channels have zero strong locking capacity, which is easily adapted to the one-shot setting. The resolution is that for any fixed but arbitrarily large amount of noise $p$, the depolarizing channel eventually ceases to be entanglement breaking for some sufficiently large $d = \text{poly}(1/p)$~\cite{gurvits2002largest}.
\newline%
\newline Our best known characterization of the locking capacity of the
IID\ memoryless depolarizing channel is in terms of the operational
inequalities given in (\ref{eq:weak-lock-bounds})-(\ref{eq:strong-lock-bounds}).
\end{example}

\begin{example}
[Erasure channel]Consider a $d$-dimensional quantum erasure channel defined as%
\[
\rho\rightarrow\left(  1-p\right)  \rho+p\left\vert e\right\rangle
\left\langle e\right\vert ,
\]
where $\left\vert e\right\rangle $ is an erasure flag state that is orthogonal
to the $d$-dimensional input state. For this channel, a unitary acting on the
input commutes with the action of the channel, so that the same argument as
above demonstrates that the $\varepsilon$-one-shot strong locking capacity of this
channel is equal to its $\varepsilon$-one-shot classical capacity for
sufficiently large $d$.\newline\newline The feedback-assisted weak and strong
locking capacities of the memoryless erasure channel are at least $\left(
1-p\right)  ^{2}$ for $p\leq1/2$ and $\left(  1-p\right)  /\left(
1+2p\right)  $ for $p\geq1/2$. Furthemore, they are no larger than $1-p$.
These results follow from the best known lower bounds on the quantum capacity
of the erasure channel assisted by classical feedback \cite{LLS09}, the fact
that the feedback-assisted classical capacity of the erasure channel cannot
exceed $1-p$, and the operational inequalities in
(\ref{eq:feedback-op-inequalities}).
\end{example}

\begin{example}
[Parallelized locking protocols]A simple parallelized protocol (as mentioned
in Ref.~\cite{QEM}) is to employ the FHS\ protocol for each use of a
memoryless depolarizing or erasure channel. However, the best known statement
regarding the parallel composition of locking protocols is given by
Proposition~2.4 of Ref.~\cite{FHS11}. That is, if one locking protocol
guarantees that the total variational distance\ of a message distribution
conditioned on the eavesdropper's measurement outcome$~x_{1}$ with the
unconditioned message distribution $p_{M}$\ is no larger than $\varepsilon
_{1}$:%
\[
\sum_{m_1} \vert p_{M_{1}|X_{1}}(m_1|x_1) - p_{M_{1}}(m_1) \vert  \leq\varepsilon_{1},
\]
and another guarantees it is no larger than $\varepsilon_{2}$:%
\[
\sum_{m_2} \vert p_{M_{2}|X_{2}}(m_2|x_2) - p_{M_{2}}(m_2) \vert  \leq\varepsilon_{2},
\]
then the parallel composition of these protocols guarantees a total
variational distance no larger than $\varepsilon_{1}+\varepsilon_{2}$:%
\[
\sum_{m_1,m_2} \vert  p_{  M_{1},M_{2}  |X}(m_1,m_2|x) - p_{M_{1}%
,M_{2}  }(m_1,m_2) \vert  \leq\varepsilon_{1}+\varepsilon_{2}.
\]
Then consider a simple parallelized
protocol consisting of $n$ uses of a $d$-dimensional channel, where we suppose
that each channel use has a guarantee that the variational distance (as above)
is no larger than $\gamma>0$. Parallel composition of the locking protocols
guarantees that the variational distance for the $n$ channel uses is no larger
than $\gamma n$. By applying the Fannes-Audenaert inequality
\cite{Fannes73,A07} as in Proposition~3.2 of Ref.~\cite{FHS11}, one finds the
following bound on the accessible information of the adversary:%
\[
\left(  \gamma n\right)  \log d_{E}^{n}+h_{2}\left(  \gamma n\right)  ,
\]
where $d_{E}$ is the dimension of the environment for a single channel use.
Thus, the number of secret key bits needed to guarantee that Eve's accessible
information is no larger than $n \varepsilon$ is equal to $O\left(  n\log
_{2}\left(  1/\varepsilon\right)  \right)  $, so that the rate of key used in
this scheme grows linearly with the number of channel uses. Clearly,
this approach is less desirable than simply using a one-time pad combined with
a classical capacity achieving code. For this latter protocol, the rate of key
is a fixed constant independent of the number of channel uses and the protocol
guarantees perfect secrecy from an adversary with access to a quantum memory.
\newline\newline
In information theory, results for memoryless channels usually follow straightforwardly from their one-shot counterparts. The linear key growth incurred when parallelizing locking protocols prevents us from quickly concluding that the non-one-shot strong locking capacities of the depolarizing and erasure channels match their classical capacities. Moreover, the covariance argument used to draw that conclusion does not translate directly to the setting of many channel uses. We therefore leave it as an open question to determine whether the equivalence persists beyond the one-shot setting.
\end{example}

\section{Upper bounds on the locking capacity when restricting to coherent-state encodings}

\label{CSQEM}In this section, we prove that there are fundamental limitations
on the locking capacity of channels when we restrict ourselves to coherent-state
encodings. In particular, we prove that the strong locking
capacity of any quantum channel cannot be any larger than%
\[
g\left(  N_{S}\right)  -\log_{2}\left(  1+N_{S}\right)  ,
\]
where $g(x) \equiv (x+1) \log_2(x+1) - x \log_2 x$,
when restricting to coherent-state encodings with mean input photon number
$N_{S}$. Observe that $g\left(  N_{S}\right)  -\log_{2}\left(  1+N_{S}\right)
\leq\log_{2}\left(  e\right)  $, and this latter bound is independent of the
photon number used for the coherent-state codewords. An intuitive (yet not
fully rigorous) reason for why we obtain this bound is that $\log_{2}\left(
1+N_{S}\right)  $ is the rate of information that an adversary
can recover about the message simply by performing
heterodyne detection on each input to the channel, while $g\left(
N_{S}\right)  $ is an upper bound on the classical capacity of any channel
with mean input photon number $N_{S}$. Thus, the difference of these two
quantities should be a bound on the strong locking capacity.

We also prove that the weak locking capacity of a pure-loss bosonic channel
cannot be any larger than the sum of its private capacity and%
\[
g\left(  \left(  1-\eta\right)  N_{S}\right)  -\log_{2}\left( 
    1+\left(
1-\eta\right)N_{S}  \right)  ,
\]
when restricting to coherent-state encodings with mean photon number $N_{S}$,
where $\eta\in\left[  0,1\right]  $ is the transmissivity of the channel. As before,
$g\left(  \left(  1-\eta\right)  N_{S}\right)  -\log_{2}\left( 
    1+\left(
1-\eta\right)N_{S}  \right)  \leq
\log_{2}\left(  e\right)  $, which is independent of the photon number.

We consider a coherent-state locking protocol in which the encrypted states
$\{U_{k}|m\rangle\}$ are generalized to a set of $n$-mode coherent states
$\{|\alpha^{n}\left(  m,k\right)  \rangle\}_{m\in\mathcal{M},k\in\mathcal{K}}%
$, where $|\alpha^{n}\left(  m,k\right)  \rangle$ is an $n$-fold tensor
product of coherent states:%
\[
\left\vert \alpha^{n}\left(  m,k\right)  \right\rangle \equiv\left\vert
\alpha_{1}\left(  m,k\right)  \right\rangle \otimes\cdots\otimes\left\vert
\alpha_{n}\left(  m,k\right)  \right\rangle .
\]

\begin{definition}
[Coherent-state locking protocol]\label{def:coh-state-QEM}A coherent-state
locking protocol\ consists of coherent-state codewords $\left\{  |\alpha
^{n}\left(  m,k\right)  \rangle\right\}  _{m\in\mathcal{M},k\in\mathcal{K}}$
depending upon the message $m$ and the key value$~k$. These codewords are then
transmitted over a quantum channel to be decoded by a receiver.
\end{definition}

\begin{theorem}
The strong locking capacity of any channel when restricting to coherent-state
encodings with mean photon number $N_{S}$ is upper bounded by $g\left(
N_{S}\right)  -\log_{2}\left(  1+N_{S}\right)  $.
\end{theorem}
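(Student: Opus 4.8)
The plan is to start from the single-letter upper bound of Theorem~\ref{thm:up-bnd-strong-locking-cap}, namely $L_S(\mathcal N)\le \lim_n \tfrac1n L_S^{(u)}(\mathcal N^{\otimes n})$ with $L_S^{(u)}=\max_{\{p(x),\rho_x\}}[I(X;B)-I_{\mathrm{acc}}(X;BE)]$, and to exploit that a coherent-state eavesdropper in the strong setting can always heterodyne. Fix any coherent-state ensemble $\{p_X(x),|\alpha_x\rangle\}$ with mean photon number at most $N_S$, and set $\bar\rho\equiv\sum_x p_X(x)\,|\alpha_x\rangle\langle\alpha_x|$. Since the inputs are pure and $U^{\mathcal N}_{A\to BE}$ is an isometry, the receiver's term satisfies $I(X;B)\le I(X;BE)=H(\bar\rho)$. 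For the eavesdropper's term I would use $I_{\mathrm{acc}}(X;BE)=I_{\mathrm{acc}}(X;A)$ (isometric equivalence) and lower bound it by the mutual information Eve extracts from heterodyning $A$; because each conditional outcome density $\tfrac1\pi|\langle\beta|\alpha_x\rangle|^2$ is a fixed-width Gaussian of differential entropy $\log_2(\pi e)$, this heterodyne information equals $W(\bar\rho)-\log_2(\pi e)$, where $W(\bar\rho)$ is the Wehrl entropy, i.e. the differential entropy of the outcome density $Q_{\bar\rho}(\beta)=\tfrac1\pi\langle\beta|\bar\rho|\beta\rangle$. Substituting reduces the problem to a purely state-dependent optimization:
\[
I(X;B)-I_{\mathrm{acc}}(X;BE)\;\le\;H(\bar\rho)-W(\bar\rho)+\log_2(\pi e).
\]

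The key step, which I expect to be the main obstacle, is to show that the thermal state is extremal for $H(\bar\rho)-W(\bar\rho)$ under the photon-number constraint. Rather than bounding $H$ and $W$ separately (which fails, since pure states make $W$ as small as $\log_2(\pi e)$), I would invoke the monotonicity of quantum relative entropy under the heterodyne measurement channel $\mathcal H:\rho\mapsto Q_\rho$, comparing $\bar\rho$ against the thermal reference $\tau_{N_S}$ of mean photon number $N_S$:
\[
D(\bar\rho\,\Vert\,\tau_{N_S})\;\ge\;D(Q_{\bar\rho}\,\Vert\,Q_{\tau_{N_S}}).
\]
Both sides evaluate in closed form using $\log_2\tau_{N_S}=\hat n\log_2\tfrac{N_S}{N_S+1}-\log_2(N_S+1)$, the Gaussian form $Q_{\tau_{N_S}}(\beta)=\tfrac{1}{\pi(1+N_S)}e^{-|\beta|^2/(1+N_S)}$, and the moment identities $\mathrm{Tr}\{\bar\rho\,\hat n\}=\bar N\le N_S$ and $\int|\beta|^2 Q_{\bar\rho}\,d^2\beta=\bar N+1$. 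After the $\pi$- and $(N_S+1)$-dependent terms cancel, the inequality rearranges to $H(\bar\rho)-W(\bar\rho)+\log_2(\pi e)\le \bar N\log_2\tfrac{N_S+1}{N_S}+(\log_2 e)\tfrac{N_S-\bar N}{N_S+1}$.

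It then remains to check this does not exceed $g(N_S)-\log_2(1+N_S)=N_S\log_2\tfrac{N_S+1}{N_S}$. Since $N_S-\bar N\ge 0$, this follows from the elementary inequality $\tfrac{1}{N_S+1}\le\ln(1+1/N_S)$ (a rearrangement of $\ln(1+x)\ge x/(1+x)$), with equality exactly at $\bar N=N_S$, i.e. for the thermal state. To upgrade the single-letter statement to the capacity bound, I would run the identical argument on $\mathcal N^{\otimes n}$, taking $\tau_{N_S}^{\otimes n}$ as reference and the $n$-mode heterodyne channel; the total constraint $\sum_i\bar N_i\le nN_S$ makes the same elementary inequality close the bound, giving $\tfrac1n L_S^{(u)}(\mathcal N^{\otimes n})\le g(N_S)-\log_2(1+N_S)$ for every $n$ and hence in the regularized limit. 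Finally I would record that $g(N_S)-\log_2(1+N_S)=N_S\log_2(1+1/N_S)\le\log_2 e$ to obtain the photon-number-independent form. The heart of the argument is recognizing that data processing for relative entropy against a thermal reference is precisely the tool that simultaneously controls the von Neumann entropy ($I(X;B)\le g(N_S)$) and the Wehrl/heterodyne term, turning the non-rigorous ``subtract $\log_2(1+N_S)$ from $g(N_S)$'' heuristic into a proof.
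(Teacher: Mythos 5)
Your proposal reproduces the paper's two central ideas: lower-bounding Eve's accessible information by heterodyne detection, so that everything reduces to controlling $H(\bar\rho)-W(\bar\rho)$, and proving thermal-state extremality of that difference via monotonicity of relative entropy under the heterodyne map, with the cross terms killed by the moment identities (the paper performs exactly the computation you sketch, up to the $\log_2\pi$ convention shift in the Wehrl entropy). Where you genuinely differ is the single-letterization: the paper reduces $n$ modes to one by viewing the pair (identity channel to Bob, heterodyne channel to Eve) as a degraded quantum wiretap channel with pure product inputs and invoking the additivity theorem of its appendix (Theorem~\ref{thm:private-info-additive}), whereas you run the relative-entropy argument directly against the product thermal reference $\tau_{N_S}^{\otimes n}$, closing the constraint $\sum_i\bar N_i\le nN_S$ with the elementary inequality $\ln(1+x)\ge x/(1+x)$. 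I checked that $n$-mode computation and it is correct; it bypasses the appendix entirely and also handles $\bar N<N_S$ cleanly, which is a tidy simplification.

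There is, however, a genuine gap at the point where you invoke Theorem~\ref{thm:up-bnd-strong-locking-cap}: you never account for the secret key. In a coherent-state locking protocol the codewords are $|\alpha^n(m,k)\rangle$, indexed by message \emph{and} key, and the ensemble that actually appears in the proof of Theorem~\ref{thm:up-bnd-strong-locking-cap} is indexed by the message alone, with the key averaged out; its conditional states $\rho_m=|\mathcal{K}|^{-1}\sum_k|\alpha^n(m,k)\rangle\langle\alpha^n(m,k)|$ are mixtures of coherent states, not pure coherent states. Both of your inequalities $I(X;B)\le H(\bar\rho)$ and $I_{\mathrm{het}}(X;A)=W(\bar\rho)-\log_2(\pi e)$ use purity/coherence of the conditional states, so your argument bounds only the maximum of $I(X;B)-I_{\mathrm{acc}}(X;BE)$ over \emph{pure} coherent-state ensembles, and it is not automatic that this restricted maximum dominates the protocol quantity $I(M;B^n)-I_{\mathrm{acc}}(M;B^nE^n)$. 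The paper closes exactly this hole with a chain-rule/data-processing step: $I(M;B^n)\le I(MK;A^n)-I(K;A^n|M)$, $I_{\mathrm{acc}}(M;A^n)\ge I_{\mathrm{het}}(MK;A^n)-I_{\mathrm{het}}(K;A^n|M)$, and $I(K;A^n|M)\ge I_{\mathrm{het}}(K;A^n|M)$, which together bound the message quantity by the pure-ensemble quantity indexed jointly by $(M,K)$---after which your analysis applies verbatim. (Alternatively, you could keep the mixed conditional states and add the observation that $W(\rho_m)-H(\rho_m)\le n\log_2(\pi e)$ for any mixture of coherent states, which follows from the Holevo bound applied to heterodyne detection of the pure-state decomposition of $\rho_m$; some such step is indispensable, since this inequality fails for general states, e.g.\ high Fock states.) With this repair your proof is complete and correct.
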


\begin{proof}
As described in Definition~\ref{def:coh-state-QEM}, the encoder for such a
scheme prepares a coherent-state codeword $|\alpha^{n}\left(  m,k\right)
\rangle$\ at the input of $n$ uses of a quantum channel $\mathcal{N}$,
depending upon the message~$m$ and the key value $k$. It is useful for us to
consider the following classical-quantum state, which describes the state of
the message, key, and input to many uses of the channel:%
\begin{equation}
\rho_{MKA^{n}}=\frac{1}{|\mathcal{M}||\mathcal{K}|}\sum_{m,k}|m,k\rangle
\langle m,k|_{MK}\otimes|\alpha^{n}\left(  m,k\right)  \rangle\langle
\alpha^{n}\left(  m,k\right)  |_{A^{n}}\,.
\end{equation}
The state after the isometric extension of the channel (unique up to unitaries
acting on the environment)\ acts is then as follows:%
\[
\rho_{MKB^{n}E^{n}}=\frac{1}{|\mathcal{M}||\mathcal{K}|}\sum_{m,k}%
|m,k\rangle\langle m,k|_{MK}\otimes U_{A^{n}\rightarrow B^{n}E^{n}%
}^{\mathcal{N}}\left(  |\alpha^{n}\left(  m,k\right)  \rangle\langle\alpha
^{n}\left(  m,k\right)  |_{A^{n}}\right)  ,
\]
where $U_{A^{n}\rightarrow B^{n}E^{n}}^{\mathcal{N}}$ is the isometry
corresponding to $n$ uses of the given channel. Recall from the proof of
Theorem~\ref{thm:up-bnd-strong-locking-cap}\ that we obtain the following
upper bound on the strong locking capacity of~$\mathcal{N}$:%
\begin{equation}
I\left(  M;B^{n}\right)  -I_{\text{acc}}\left(  M;B^{n}E^{n}\right)  +o\left(
n\right)  +n2\varepsilon^{\prime}. \label{eq:strong-locking-upper-bound}%
\end{equation}
(Recall that this bound holds for any $\left(  n,R,\varepsilon\right)
$\ strong locking protocol, with $\varepsilon^{\prime}$ a function of
$\varepsilon$ that vanishes as $\varepsilon\rightarrow0$.)\ Consider that the
information quantity $I\left(  M;B^{n}\right)  $ is upper bounded as follows:%
\begin{align*}
I\left(  M;B^{n}\right)  _{\rho}  &  \leq I\left(  M;A^{n}\right)  _{\rho}\\
&  =I\left(  MK;A^{n}\right)  _{\rho}-I\left(  K;A^{n}|M\right)  _{\rho},
\end{align*}
where the first inequality follows from quantum data processing, and the
equality follows from the chain rule for quantum mutual information. We then
find that%
\begin{align}
I\left(  MK;A^{n}\right)  _{\rho}  &  =H\left(  A^{n}\right)  _{\rho}-H\left(
A^{n}|MK\right)  _{\rho}\nonumber\\
&  =H\left(  A^{n}\right)  _{\rho}, \label{eq:mut-eq-vN-ent}%
\end{align}
where the second equality follows because the state on $A^{n}$ is a pure
coherent state when conditioned on systems $M$ and $K$.

On the other hand, we obtain a lower bound on the accessible information
$I_{\text{acc}}\left(  M;B^{n}E^{n}\right)  =I_{\text{acc}}\left(
M;A^{n}\right)  $\ by having the adversary perform heterodyne detection (a
particular measurement that is not necessarily the optimal one) on each of the
systems $A^{n}$, giving%
\begin{align}
I_{\mathrm{acc}}\left(  M;A^{n}\right)  _{\rho} &  \geq I_{\mathrm{het}%
}(M;A^{n})_{\rho}\\
&  =I_{\mathrm{het}}\left(  MK;A^{n}\right)  _{\rho}-I_{\mathrm{het}}\left(
K;A^{n}|M\right)  _{\rho},
\end{align}
where in the second line we again apply the chain rule for mutual information.
An ideal $n$-mode heterodyne measurement is described by a POVM $\{\frac
{d^{2n}\beta^{n}}{\pi^{n}}|\beta^{n}\rangle\langle\beta^{n}|\}$, where
$\beta^{n}$ is the amplitude of the $n$-mode coherent state $|\beta^{n}%
\rangle\equiv\left\vert \beta_{1}\right\rangle \cdots\left\vert \beta
_{n}\right\rangle $ and $d^{2n}\beta^{n}$ denotes the Lebesgue measure on
$\mathbb{C}^{n}$. We can then compute the heterodyne mutual information
$I_{\mathrm{het}}\left(  MK;A^{n}\right)  _{\rho}$\ as%
\[
I_{\mathrm{het}}\left(  MK;A^{n}\right)  _{\rho}=W\left(  A^{n}\right)
_{\rho}-W\left(  A^{n}|MK\right)  _{\rho}\ ,
\]
where%
\begin{equation}
W(Q)_{\sigma}=-\int\frac{d^{2n}\beta^{n}}{\pi^{n}}\langle\beta^{n}|\sigma
|\beta^{n}\rangle\log_{2}{\langle\beta}^{n}{|\sigma|\beta}^{n}{\rangle}%
\end{equation}
denotes the Wehrl entropy for a state $\sigma$ defined on system $Q$
\cite{Wehrl} and its conditional version follows in the natural way. It is
easy to see that the Wehrl entropy of an $n$-mode coherent state is equal to
$n\log_{2}\left(  e\right)  $, so we find that%
\begin{equation}
I_{\mathrm{het}}\left(  MK;A^{n}\right)  _{\rho}=W\left(  A^{n}\right)
_{\rho}-n\log_{2}\left(  e\right)  .\label{eq:heterodyne-info}%
\end{equation}
We are now in a position to derive an upper bound on
(\ref{eq:strong-locking-upper-bound}). Observe that our development above
implies that%
\begin{align}
I\left(  M;B^{n}\right)  -I_{\text{acc}}\left(  M;B^{n}E^{n}\right)   &  \leq
I\left(  MK;A^{n}\right)  _{\rho}-I\left(  K;A^{n}|M\right)  _{\rho
}\nonumber\\
&  \ \ \ \ -\left[  I_{\mathrm{het}}\left(  MK;A^{n}\right)  _{\rho
}-I_{\mathrm{het}}\left(  K;A^{n}|M\right)  _{\rho}\right]  \nonumber\\
&  \leq I\left(  MK;A^{n}\right)  _{\rho}-I_{\mathrm{het}}\left(
MK;A^{n}\right)  _{\rho}\nonumber\\
&  \leq\max_{p_{X}\left(  x\right)  }\left[  I\left(  X;A^{n}\right)
_{\omega}-I_{\mathrm{het}}\left(  X;A^{n}\right)  _{\omega}\right]
\nonumber\\
&  \leq n\ \max_{p_{X}\left(  x\right)  }\left[  I\left(  X;A\right)
_{\sigma}-I_{\mathrm{het}}\left(  X;A\right)  _{\sigma}\right]  \nonumber\\
&  =n\ \left(  \log_{2}\left(  e\right)  +\max_{p_{X}\left(  x\right)
}\left[  H\left(  A\right)  _{\sigma}-W\left(  A\right)  _{\sigma}\right]
\right)  \nonumber\\
&  \leq n\ \left[  g\left(  N_{S}\right)  -\log_{2}\left(  1+N_{S}\right)
\right]  .\label{eq:heterodyne-bound-1}%
\end{align}
The second inequality follows from data processing: $I\left(  K;A^{n}%
|M\right)  _{\rho}\geq I_{\mathrm{het}}\left(  K;A^{n}|M\right)  _{\rho}$ (the
system $M$ is classical, and performing heterodyne detection on $A^{n}$ can
only reduce the mutual information). The third inequality follows by taking a
maximization over all distributions $p_{X}\left(  x\right)  $ where
$\omega_{XA^{n}}$ is a state of the following form:%
\[
\omega_{XA^{n}}\equiv\sum_{x}p_{X}\left(  x\right)  \left\vert x\right\rangle
\left\langle x\right\vert _{X}\otimes\left\vert \alpha_{x}^{n}\right\rangle
\left\langle \alpha_{x}^{n}\right\vert _{A^{n}},
\]
such that the mean input photon number to the channel for each $x$ is $N_{S}$. The fourth
inequality follows by realizing that the difference between the mutual
information and the heterodyne information is equal to the private information
of a quantum wiretap channel in which the state $\left\vert \alpha_{x}%
^{n}\right\rangle $ is prepared for the receiver while the heterodyned version
of this state (a classical variable) is prepared for the eavesdropper. Such a
quantum wiretap channel has pure product input states (they are coherent
states) and it is degraded. Thus, we can apply
Theorem~\ref{thm:private-info-additive}\ from the appendix to show that this
private information is subadditive, in the sense that%
\[
\max_{p_{X}\left(  x\right)  }\left[  I\left(  X;A^{n}\right)  _{\omega
}-I_{\mathrm{het}}\left(  X;A^{n}\right)  _{\omega}\right]  \leq
n\ \max_{p_{X}\left(  x\right)  }\left[  I\left(  X;A\right)  _{\sigma
}-I_{\mathrm{het}}\left(  X;A\right)  _{\sigma}\right]  ,
\]
where we define the state $\sigma_{XA}$ as follows:%
\[
\sigma_{XA}\equiv\sum_{x}p_{X}\left(  x\right)  \left\vert x\right\rangle
\left\langle x\right\vert _{X}\otimes\left\vert \alpha_{x}\right\rangle
\left\langle \alpha_{x}\right\vert _{A}.
\]
The last equality follows from the observation in (\ref{eq:heterodyne-info})
and because $I\left(  X;A\right)  _{\sigma}=H\left(  A\right)  _{\sigma
}-H\left(  A|X\right)  _{\sigma}=H\left(  A\right)  _{\sigma}$ (since the
states are pure when conditioned on $X$).

We now show that the maximizing distribution for $\max_{p_{X}\left(  x\right)
}\left[  H\left(  A\right)  _{\sigma}-W\left(  A\right)  _{\sigma}\right]  $
is given by a circularly-symmetric Gaussian distribution with variance $N_{S}%
$, so that the optimal ensemble is a Gaussian ensemble of coherent states.
Indeed, let $\varrho$ be a single-mode quantum state with $\mathrm{Tr}%
[a\varrho]=0$ and $\mathrm{Tr}[a^{\dagger}a\varrho]=N_{S}$ where $a^{\dagger}$
and $a$ are creation and annihilation operators, respectively. The von Neumann
entropy is given by $H(\varrho)=-\mathrm{Tr}[\varrho\log_{2}\varrho]$. 
We show that
\begin{equation}
H(\varrho)-W(\varrho)\label{eq:h-w}%
\end{equation}
is maximized when $\varrho$ is a thermal state. Our approach is based on a
technique used in the appendix of Ref.~\cite{HSO99}, which in turn is based on
classical approaches to this problem \cite{Cover}. Let%
\begin{equation}
\widetilde{\varrho}=\frac{1}{N_{S}+1}\sum_{m=0}^{\infty}\Bigg(  \frac{N_{S}%
}{N_{S}+1}\Bigg)^m  |m\rangle\langle m|,\label{eq:gaussification}%
\end{equation}
be a thermal state with mean photon number $N_{S}$. We will show that%
\begin{equation}
H(\widetilde{\varrho}) - W(\widetilde{\varrho}) - 
\left(  H(\varrho)-W(\varrho )\right)  
\geq 0 \, , \label{eq:h-w2}%
\end{equation}
holds for any $\varrho$ with $\mathrm{Tr}[a\varrho]=0$ and $\mathrm{Tr}%
[a^{\dagger}a\varrho]=N_{S}$. 
Putting
\begin{equation}
\mathcal{Q}_{\varrho}(\beta) = \langle\beta|\varrho|\beta\rangle \, ,\label{eq:Q_function}\
\end{equation}
the left hand side of (\ref{eq:h-w2}) is equal to
\begin{align}
&  -\mathrm{Tr}[\widetilde{\varrho}\log_{2}\widetilde{\varrho}] + \mathrm{Tr}[\varrho\log_{2}\varrho]
+ \int \frac{d^{2}\beta}{\pi} \mathcal{Q}_{\widetilde{\varrho}}(\beta)\log_{2}\mathcal{Q}_{\widetilde{\varrho}}(\beta)
- \int \frac{d^{2}\beta}{\pi} \mathcal{Q}_{\varrho}(\beta)\log_{2}\mathcal{Q}_{\varrho}(\beta) \nonumber \label{eq:h-w3}\\
& = \mathrm{Tr}[\varrho(\log_{2}\varrho-\log_{2}\widetilde{\varrho})]+\mathrm{Tr}[(\varrho-\widetilde{\varrho})\log_{2}\widetilde{\varrho
}]\nonumber\\
&  \ \ \ \ \ -\left\{  \int \frac{d^{2}\beta}{\pi} \mathcal{Q}_{\varrho}(\beta)(\log_{2}\mathcal{Q}_{\varrho}(\beta)-\log_{2}\mathcal{Q}_{\tilde{\rho}}(\beta) )
+ \int \frac{d^{2}\beta}{\pi} (\mathcal{Q}_{\varrho}(\beta)-\mathcal{Q}_{\widetilde{\varrho}}(\beta))\log_{2}\mathcal{Q}_{\widetilde{\varrho}}%
(\beta)\right\}  \nonumber\\
& = D(\varrho||\widetilde{\varrho})-D(\mathcal{Q}_{\varrho}||\mathcal{Q}_{\widetilde{\varrho}})
+ \mathrm{Tr}[(\varrho-\widetilde{\varrho})\log_{2}\widetilde{\varrho}]
- \int \frac{d^{2}\beta}{\pi} (\mathcal{Q}_{\varrho}(\beta)-\mathcal{Q}_{\widetilde{\varrho}}(\beta))\log
_{2}\mathcal{Q}_{\widetilde{\varrho}}(\beta),
\end{align}
where $D(\varrho||\widetilde{\varrho})$ and $D(\mathcal{Q}_{\varrho}||\mathcal{Q}_{\widetilde
{\varrho}})$ are quantum and classical relative entropies, respectively. We
can easily show that their difference is positive by the monotonicity
property of the relative entropy. The third term is%
\begin{align}
\mathrm{Tr}[(\varrho-\widetilde{\varrho})\log_{2}\widetilde{\varrho}] &
=\mathrm{Tr}\left[  (\varrho-\widetilde{\varrho})\sum_{m=0}^{\infty}\log
_{2}\left\{  \frac{1}{N_{S}+1}\Bigg(  \frac{N_{S}}{N_{S}+1}\Bigg)
^{a^{\dagger}a}\right\}  |m\rangle\langle m|\right]
\nonumber\label{eq:third_term}\\
&  =-\log_{2}(N_{S}+1)\mathrm{Tr}[\varrho-\widetilde{\varrho}]+\log_{2}\left(
\frac{N_{S}}{N_{S}+1}\right)  \mathrm{Tr}[(\varrho-\widetilde{\varrho
})a^{\dagger}a]\nonumber\\
&  =0.
\end{align}
Similarly, the fourth term is%
\begin{align}
& \int \frac{d^{2}\beta}{\pi} \left(  \mathcal{Q}_{\varrho}(\beta)-\mathcal{Q}_{\widetilde{\varrho}}%
(\beta)\right)  \log_{2}\mathcal{Q}_{\widetilde{\varrho}}(\beta
)\nonumber\label{eq:fourth_term}\\
&  = \int \frac{d^{2}\beta}{\pi} \left(  \mathcal{Q}_{\varrho}(\beta)-\mathcal{Q}_{\widetilde{\varrho}}%
(\beta)\right)  \left(  -\log_{2}(N_{S}+1)-\frac{|\beta|^{2}}{\ln(2) (N_{S}%
+1)}\right)  \\
&  =0.
\end{align}
Note that $\mathcal{Q}_{\widetilde{\varrho}}(\beta)=\frac{1}{(N_{S}+1)}\exp\left[
-\frac{|\beta|^{2}}{N_{S}+1}\right]  $ and we used the fact that if
$\mathrm{Tr}[a^{\dagger}a\varrho]=\mathrm{Tr}[a^{\dagger}a\tau]$ then%
\[
\int d^{2}\beta\ \mathcal{Q}_{\varrho}(\beta)\ |\beta|^{2}=\int d^{2}\beta\ \mathcal{Q}_{\tau}(\beta)\ |\beta|^{2}.
\]
As a consequence, we have%
\begin{equation}
H(\widetilde{\varrho})-W(\widetilde{\varrho})-\left(  H(\varrho)-W(\varrho)\right)  =D(\varrho||\widetilde{\varrho})-D(\mathcal{Q}_{\varrho}||\mathcal{Q}_{\widetilde{\varrho}})\geq0,
\end{equation}
which completes the proof that $\max_{p_{X}\left(  x\right)  }\left[  H\left(
A\right)  _{\sigma}-W\left(  A\right)  _{\sigma}\right]  $ is optimized by a
circularly symmetric complex Gaussian distribution with variance $N_{S}$.

Finally, we can rewrite $\log_{2}\left(  e\right)  +\max_{p_{X}\left(
x\right)  }\left[  H\left(  A\right)  _{\sigma}-W\left(  A\right)  _{\sigma
}\right]  $ as $I\left(  X;A\right)  _{\sigma}-I_{\mathrm{het}}\left(
X;A\right)  _{\sigma}$ for $X$ complex Gaussian, and these information
quantities evaluate to $g\left(  N_{S}\right)  -\log_{2}\left(  1+N_{S}%
\right)  $ in such a case. By combining the bounds in
(\ref{eq:strong-locking-upper-bound}) and (\ref{eq:heterodyne-bound-1}), we
deduce the following upper bound on the rate$~R$ of any strong locking
protocol that employs coherent-state codewords with mean photon number$~N_{S}%
$:%
\[
R\leq g\left(  N_{S}\right)  -\log_{2}\left(  1+N_{S}\right)  +\frac{o\left(
n\right)  }{n}+2\varepsilon^{\prime},
\]
which converges to $g\left(  N_{S}\right)  -\log_{2}\left(  1+N_{S}\right)  $
in the limit as $n\rightarrow\infty$ and $\varepsilon\rightarrow0$.
\end{proof}

\begin{theorem}
The weak locking capacity of a pure-loss bosonic channel with transmissivity
$\eta\in\left[  0,1\right]  $ when restricting to coherent-state encodings
with mean input photon number $N_{S}$ is upper bounded by%
\[
\max\left\{  0,g\left(  \eta N_{S}\right)  -g\left(  \left(  1-\eta\right)
N_{S}\right)  \right\}  +\left[  g\left(  \left(  1-\eta\right)  N_{S}\right)
-\log_{2}\left( 
    1+\left(
1-\eta\right)N_{S}  \right)
\right]  .
\]
The term $\max\left\{  0,g\left(  \eta N_{S}\right)  -g\left(  \left(
1-\eta\right)  N_{S}\right)  \right\}  $ is equal to the private capacity of
the pure-loss bosonic channel, while the second term is limited by the bound%
\[
\left[  g\left(  \left(  1-\eta\right)  N_{S}\right)  -\log_{2}\left( 
    1+\left(
1-\eta\right)N_{S}  \right)  \right]  \leq\log_{2}\left(
e\right)  .
\]

\end{theorem}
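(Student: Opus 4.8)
The plan is to combine the upper bound of Theorem~\ref{thm:WLC} with the machinery already developed in the proof of the previous theorem, applied this time to the channel's environment. By Theorem~\ref{thm:WLC}, the weak locking capacity is bounded by the regularization of $L_W^{(u)}$; restricting to coherent-state codewords $\{|\alpha^n(x)\rangle\}$ of mean input photon number $N_S$, this becomes $\max_{\{p(x),|\alpha^n(x)\rangle\}}[I(X;B^n) - I_{\mathrm{acc}}(X;E^n)]$. My first step is to split the objective as
\[
I(X;B^n) - I_{\mathrm{acc}}(X;E^n) = \left[I(X;B^n) - I(X;E^n)\right] + \left[I(X;E^n) - I_{\mathrm{acc}}(X;E^n)\right],
\]
and to use that the maximum of a sum is at most the sum of the maxima of the two bracketed terms.

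For the first bracket, I would argue that after regularization it is controlled by the private capacity of the channel. Since restricting to coherent states can only shrink the optimization, $\lim_n \tfrac{1}{n}\max[I(X;B^n)-I(X;E^n)]$ is no larger than the unrestricted private capacity $P(\mathcal{N})$. For the pure-loss bosonic channel, $P(\mathcal{N}) = \max\{0, g(\eta N_S) - g((1-\eta)N_S)\}$, a consequence of the channel being degradable for $\eta\geq 1/2$ (and antidegradable, hence of vanishing private capacity, for $\eta<1/2$); this accounts for the first term in the claimed bound.

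For the second bracket, the key observation is that the complementary channel of a pure-loss channel of transmissivity $\eta$ is again a pure-loss channel, now of transmissivity $1-\eta$, so the environment receives coherent states of mean photon number $(1-\eta)N_S$. I would lower-bound the accessible information by heterodyne detection on $E^n$, $I_{\mathrm{acc}}(X;E^n)\geq I_{\mathrm{het}}(X;E^n)$, and then replay the argument from the proof of the preceding theorem: the pure conditional states give $I(X;E^n)=H(E^n)$ and $I_{\mathrm{het}}(X;E^n)=W(E^n)-n\log_2(e)$, the resulting difference is the private information of a degraded wiretap channel with pure-product coherent-state inputs so that Theorem~\ref{thm:private-info-additive} makes it subadditive, and the single-mode optimization of $H(E)-W(E)$ is attained by a thermal state. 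The only change is the substitution $N_S\to(1-\eta)N_S$, which yields the bound $g((1-\eta)N_S)-\log_2(1+(1-\eta)N_S)$ per channel use for the second bracket.

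Adding the two contributions gives the stated upper bound, and the final inequality $g((1-\eta)N_S)-\log_2(1+(1-\eta)N_S)\leq\log_2(e)$ is the same elementary property of $g$ invoked previously. The step I expect to require the most care is the bookkeeping for the first bracket: one must confirm that the degradability structure and the known energy-constrained private-capacity formula for the pure-loss channel really do furnish $\max\{0,g(\eta N_S)-g((1-\eta)N_S)\}$ as an upper bound under the coherent-state restriction, whereas the second bracket is essentially a transcription of the earlier proof with the environment photon number $(1-\eta)N_S$ in place of $N_S$.
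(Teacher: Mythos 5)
Your overall strategy matches the paper's proof: split the bound of Theorem~\ref{thm:WLC} into a private-information piece, controlled by the energy-constrained private capacity $\max\left\{ 0,g\left( \eta N_{S}\right) -g\left( \left( 1-\eta\right) N_{S}\right) \right\}$ of the pure-loss channel, plus a Holevo-minus-accessible gap for the environment, controlled by replaying the heterodyne/Wehrl-entropy argument (thermal-state extremality plus Theorem~\ref{thm:private-info-additive} for single-letterization) at photon number $\left( 1-\eta\right) N_{S}$. However, there is a genuine gap in your very first reduction: the claim that, for coherent-state encodings, the bound of Theorem~\ref{thm:WLC} \emph{becomes} $\max_{\{p(x),|\alpha^{n}(x)\rangle\}}\left[ I\left( X;B^{n}\right) -I_{\text{acc}}\left( X;E^{n}\right) \right]$ over ensembles of \emph{pure} coherent states. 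In a locking protocol the codewords depend on both the message $m$ and the key $k$, and the quantity that actually arises in the proof of Theorem~\ref{thm:WLC} is $I\left( M;B^{n}\right) -I_{\text{acc}}\left( M;E^{n}\right)$, where the conditioning is on the message alone; the conditional states $\frac{1}{|\mathcal{K}|}\sum_{k}\mathcal{N}_{A\rightarrow E}^{\otimes n}\left( |\alpha^{n}(m,k)\rangle\langle\alpha^{n}(m,k)|\right)$ are key-averaged \emph{mixtures} of coherent states. Your identities $I\left( X;E^{n}\right) =H\left( E^{n}\right)$ and $I_{\text{het}}\left( X;E^{n}\right) =W\left( E^{n}\right) -n\log_{2}(e)$ require pure conditional states and therefore fail for this ensemble (in particular $H\left( E^{n}|M\right) \neq 0$). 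Nor can you simply refine the ensemble label from $m$ to $(m,k)$: that increases $I\left( X;B^{n}\right)$, which is harmless, but it also increases the subtracted term $I_{\text{acc}}\left( X;E^{n}\right)$, which goes the wrong way.

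The missing ingredient is an explicit appeal to the sublinearity of the key. The paper first lower bounds $I_{\text{acc}}\left( M;E^{n}\right) \geq I_{\text{het}}\left( M;E^{n}\right)$, then uses the chain rule for the classical heterodyne mutual information, $I_{\text{het}}\left( M;E^{n}\right) =I_{\text{het}}\left( MK;E^{n}\right) -I_{\text{het}}\left( K;E^{n}|M\right)$, together with $I_{\text{het}}\left( K;E^{n}|M\right) \leq H\left( K\right) \leq o\left( n\right)$. Only after paying this $o(n)$ does one hold an ensemble of pure coherent states, labeled by $(m,k)$, to which your second-bracket argument applies verbatim with photon number $\left( 1-\eta\right) N_{S}$. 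Note also that this bookkeeping is genuinely different from the strong-locking proof you propose to transcribe: there the key-conditional terms cancel by data processing, since both the Holevo and heterodyne informations refer to the \emph{same} systems $A^{n}$, whereas here the Holevo term refers to $B^{n}$ and the heterodyne term to $E^{n}$, so no such cancellation is available and the sublinear-key assumption is indispensable. With this repair inserted, your argument coincides with the paper's proof.
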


\begin{proof}
The proof of this theorem is somewhat similar to the proof of the previous
theorem. Nevertheless, there are some important differences, and so we give
the full proof for completeness.

In the proof of Theorem~\ref{thm:WLC},\ we obtained the following upper bound
on the weak locking capacity:%
\begin{equation}
I\left(  M;B^{n}\right)  -I_{\text{acc}}\left(  M;E^{n}\right)  +o\left(
n\right)  +n2\varepsilon^{\prime}.\label{eq:WLC-bosonic-bound}%
\end{equation}
(Recall that this bound holds for any $\left(  n,R,\varepsilon\right)
$\ strong locking protocol, with $\varepsilon^{\prime}$ a function of
$\varepsilon$ that vanishes when $\varepsilon\rightarrow0$.)\ We begin by
bounding the quantity $I\left(  M;B^{n}\right)  -I_{\text{acc}}\left(
M;E^{n}\right)  $:%
\begin{align*}
&  I\left(  M;B^{n}\right)  -I_{\text{acc}}\left(  M;E^{n}\right)  \\
&  \leq I\left(  MK;B^{n}\right)  -\left[  I_{\text{het}}\left(
MK;E^{n}\right)  -I_{\text{het}}\left(  K;E^{n}|M\right)  \right]  \\
&  \leq I\left(  MK;B^{n}\right)  -I_{\text{het}}\left(  MK;E^{n}\right)
+o\left(  n\right)  \\
&  =H\left(  B^{n}\right)  -W\left(  E^{n}\right)  +n\log_{2}\left(  e\right)
+o\left(  n\right)  \\
&  =H\left(  B^{n}\right)  -H\left(  E^{n}\right)  +H\left(  E^{n}\right)
-W\left(  E^{n}\right)  +n\log_{2}\left(  e\right)  +o\left(  n\right)  \\
&  \leq n\left[  \max\left\{  0,g\left(  \eta N_{S}\right)  -g\left(  \left(
1-\eta\right)  N_{S}\right)  \right\}  \right]  \\
&  \ \ \ \ \ +n\left[  g\left(  \left(  1-\eta\right)  N_{S}\right)  -
\log_{2}\left( 
    1+\left(
1-\eta\right)N_{S}  \right)  \right]
+o\left(  n\right)  .
\end{align*}
The first inequality follows from data processing $I\left(  M;B^{n}\right)
\leq I\left(  MK;B^{n}\right)  $, the fact that $I_{\text{acc}}\left(
M;E^{n}\right)  \geq I_{\text{het}}\left(  MK;E^{n}\right)  $, and the
identity $I_{\text{het}}\left(  M;E^{n}\right)  =I_{\text{het}}\left(
MK;E^{n}\right)  -I_{\text{het}}\left(  K;E^{n}|M\right)  $. The second
inequality follows because $I_{\text{het}}\left(  K;E^{n}|M\right)  \leq
H\left(  K\right)  \leq o\left(  n\right)  $. The first equality follows from
the fact that $I\left(  MK;B^{n}\right)  =H\left(  B^{n}\right)  $ for the
pure-loss bosonic channel and from the fact that $I_{\text{het}}\left(
MK;E^{n}\right)  =W\left(  E^{n}\right)  -n\log_{2}\left(  e\right)  $. The
second equality is a simple identity. The final inequality follows because the
entropy difference $H\left(  B^{n}\right)  -H\left(  E^{n}\right)  $ is equal
to a coherent information of the $n$-use pure-loss bosonic channel. The only
relevant property of the input state for which the coherent information is
evaluated is that it has a mean photon number $N_{S}$, and so the coherent
information is always lower than $n\max\left\{  0,g\left(  \eta N_{S}\right)
-g\left(  \left(  1-\eta\right)  N_{S}\right)  \right\}  $, which is equal to
$n$ times the quantum and private capacity of this channel \cite{WPG07,WHG12}.
We also employ an argument similar to that in the previous theorem to bound
$H\left(  E^{n}\right)  -W\left(  E^{n}\right)  +n\log_{2}\left(  e\right)  $
from above by $n\left[  g\left(  \left(  1-\eta\right)  N_{S}\right)
-\log_{2}\left( 
    1+\left(
1-\eta\right)N_{S}  \right)
\right]  $. Finally, by combining the above bound with the bound in
(\ref{eq:WLC-bosonic-bound}), we deduce the following upper bound on the rate
$R$ of any weak locking protocol that employs coherent-state codewords for
transmission over a pure-loss bosonic channel:%
\[
R\leq\max\left\{  0,g\left(  \eta N_{S}\right)  -g\left(  \left(
1-\eta\right)  N_{S}\right)  \right\}  +\left[  g\left(  \left(
1-\eta\right)  N_{S}\right)  -\log_{2}\left( 
    1+\left(
1-\eta\right)N_{S}  \right)  \right]  +\frac{o\left(  n\right)  }{n}%
+2\varepsilon^{\prime},
\]
which converges to $\max\left\{  0,g\left(  \eta N_{S}\right)  -g\left(
\left(  1-\eta\right)  N_{S}\right)  \right\}  +\left[  g\left(  \left(
1-\eta\right)  N_{S}\right)  -\log_{2}\left( 
    1+\left(
1-\eta\right)N_{S}  \right)  \right]  $ in the limit as $n\rightarrow\infty$ and
$\varepsilon\rightarrow0$.
\end{proof}

\begin{remark}
Given that the private capacity of a pure-loss bosonic channel with mean input
photon number $N_{S}$ is equal to $\max\left\{  0,g\left(  \eta N_{S}\right)
-g\left(  \left(  1-\eta\right)  N_{S}\right)  \right\}  $, the above theorem
implies a strong limitation on the weak locking capacity of a pure-loss
bosonic channel when restricting to coherent-state encodings with mean input
photon number $N_{S}$. That is, the weak locking capacity when restricting to
coherent-state encodings cannot be more than 1.45 bits larger than the channel's
private capacity.
\end{remark}

\begin{remark}
These bounds apply in particular to channels that use a coherent-state locking protocol in which
there is a fixed codebook $\left\{  \left\vert \alpha^{n}\left(  m\right)
\right\rangle \right\}  $ and the coherent states are encrypted according to
passive mode transformations $U_{k}$ that transform $n$-mode\ coherent states
as $\left\vert \alpha^{n}\right\rangle \rightarrow|\widetilde{U}_{k}\alpha
^{n}\rangle$, where $\widetilde{U}_{k}\alpha^{n}$ is understood to be a label
for a coherent state vector with the following complex amplitudes:%
\begin{equation}%
\begin{bmatrix}
\widetilde{U}_{k}^{\left(  1,1\right)  } & \cdots & \widetilde{U}_{k}^{\left(
1,n\right)  }\\
\vdots & \ddots & \vdots\\
\widetilde{U}_{k}^{\left(  n,1\right)  } & \cdots & \widetilde{U}_{k}^{\left(
n,n\right)  }%
\end{bmatrix}%
\begin{bmatrix}
\alpha_{1}\\
\vdots\\
\alpha_{n}%
\end{bmatrix}
.\label{eq:passive-mode-trans}%
\end{equation}

\end{remark}

\begin{remark}
If the coherent-state locking protocol consists of passive mode
transformations (as defined above) for the encryption and the receiver
performs heterodyne detection to recover the message after decrypting with a
passive mode transformation, then the strong locking capacity of a channel
using such a scheme
is equal to zero. This result follows because such a scheme has a classical
simulation---passive mode transformations commute with heterodyne detection in
such a way that heterodyne detection can be performed first followed by a
classical postprocessing of the measurement data with a matrix multiplication
as in (\ref{eq:passive-mode-trans}). That is, the decoding in such a scheme is
equivalent to first performing key-independent heterodyne detection
measurements followed by classical post-processing of the key and the
measurement results. Thus, Theorem~\ref{thm:classical-sim}\ applies
so that the strong locking capacity of a channel using such a
scheme is equal to zero. However, this theorem does not apply if the receiver performs
photodetection because passive mode transformations do not commute with
such a measurement.
\end{remark}

From our upper bounds on the locking capacity of channels restricted to coherent-state
encodings, it is clear that there are strong limitations on the rates that are
achievable when employing bright coherent states. That is, it clearly would
not be worthwhile to invest a large mean input photon number per transmission
given the above limitations on locking capacity that are independent of the
photon number. In spite of this result, it might be possible to achieve
interesting locking rates with weak coherent states, but we should keep in
mind that the above bounds were derived by considering the information that an
adversary can gain by performing heterodyne detection---the information of the
adversary can only increase if she performs a better measurement.
Nevertheless, we can determine values of the mean input photon number $N_{S}$
such that the difference $g\left(  N_{S}\right)  -\log_{2}\left(
1+N_{S}\right)  $ becomes relatively large. By considering $N_{S}\ll1$, we
find the following expansions of $g(N_{S})$ and $\log_{2}{(1+N_{S})}$,
respectively:%
\begin{align}
g(N_{S})  &  \approx\left(  -N_{S}\ln{N_{S}}+N_{S}+\frac{N_{S}^{2}}{2}\right)
\log_{2}\left(  {e}\right)  \,,\\
\log_{2}{(1+N_{S})}  &  \approx\left(  N_{S}-\frac{N_{S}^{2}}{2}\right)
\log_{2}\left(  {e}\right)  \,,
\end{align}
so that the difference $g\left(  N_{S}\right)  -\log_{2}\left(  1+N_{S}%
\right)  \approx\left[  -N_{S}\ln{N_{S}+}N_{S}^{2}\right]  \log_{2}\left(
e\right)  $ for $N_{S}\ll1$. Indeed, in the limit as $N_{S}\rightarrow0$, we
find that the relative ratio of our upper bound on the strong locking capacity
of a coherent-state protocol to the classical capacity $g\left(  N_{S}\right)
$\ of the noiseless bosonic channel approaches one:%
\[
\lim_{N_{S}\rightarrow0}\frac{g(N_{S})-\log_{2}{(1+N_{S})}}{g(N_{S})} = 1,
\]
so that there is some sense in which the rate at which we can lock information
becomes similar to the rate at which we can insecurely communicate information
if the bound $g(N_{S})-\log_{2}{(1+N_{S})}$ is in fact achievable. However,
this remains an important open question.

\section{One-shot PPM coherent-state locking protocol}\label{weakQEM}

In spite of the previous section's limitations on the locking capacity of
coherent-state protocols, we still think it is interesting to explore what
kind of locking protocols are possible using coherent-state encodings. To this
end, we now discuss a one-shot strong locking protocol (in the sense of
Definition~\ref{def:one-shot-locking-cap}) that employs a coherent-state encoding.
For simplicity, we consider the case of a noiseless channel,
with the generalization to the pure-loss bosonic channel
and weak locking being straightforward.

An explicit scheme for locking using weak coherent states can be obtained by
analogy with the PPM encryption presented in Ref.~\cite{QEM} and reviewed in
Section~\ref{unaryQEM}. Similar to the single-photon scheme, to encode a
message $m$ Alice prepares an $n$-mode coherent state $|\alpha_{m}\rangle$
which is a tensor product of a single-mode coherent state of amplitude~$\alpha
$ on the $m$th mode and the vacuum on the remaining $n-1$ modes:
\[
|\alpha_{m}\rangle\equiv|0\rangle_{1}\ldots|0\rangle_{m-1}|\alpha\rangle
_{m}|0\rangle_{m+1}\ldots|0\rangle_{n}.
\]
(Notice that, as in the single-photon case, the PPM encoding is highly
inefficient in terms of number of modes, as it encodes only  $\log_{2}{n}$ bits
into $n$ bosonic modes.) Let us fix $N_{\mathrm{tot}}=|\alpha|^{2}$ to be the
total mean number of photons involved in the protocol, and
\begin{equation}
N_{S}\equiv\frac{N_{\mathrm{tot}}}{n}%
\end{equation}
to be the mean photon number per mode. Before sending anything to Bob, Alice
encrypts a message by applying a unitary selected uniformly at random
(according to the shared secret key) from a set of $\left\vert \mathcal{K}%
\right\vert $ $n$-mode linear-optical passive transformations. If the unitary
$U_{k}$ is used, then the final state is the $n$-mode coherent state
\begin{equation}
U_{k}|\alpha_{m}\rangle=\bigotimes_{m^{\prime}=1}^{n} | \widetilde{U}%
_{k}^{\left(  m^{\prime},m\right)  } \alpha\rangle\,.
\end{equation}
Bob, who knows which unitary has been chosen by Alice, applies the inverse
transformation and performs photodetection on the received modes. He will
detect (one or more) photons only in the $m$th mode, hence successfully
decrypting the message in case of a detection.

Different from the single-photon architecture of Ref.~\cite{QEM}, there is a
non-zero probability that Bob's detector does not click. Analogous to the case
of the single-photon locking protocol in the presence of loss, if no photon is
detected, Bob may use a public classical communication channel to ask Alice to resend,
yielding
\begin{equation}
I_{\mathrm{acc}}(M;KQ)_{\rho} = N_\mathrm{tot} \log_{2}{n}\,.
\end{equation}
However, the same observations from Section~\ref{unaryQEM}
apply here. That is, locking is only known
to be secure when the message distribution is uniform, and this is certainly not the case
for a feedback-assisted scheme unless Eve attacks each PPM block independently. If she attacks
collectively, then it is necessary for Alice and Bob to exploit an amount of key necessary
to ensure that the message distribution is uniform.

Assuming that Eve independently attacks each block that she receives,
we have to evaluate her accessible information with respect to the following state:
\begin{equation}
\rho_{MKQ} = \frac{1}{n \left\vert
\mathcal{K}\right\vert }\sum_{m,k}|m,k\rangle\langle m,k|_{MK}\otimes\left(
U_{k}|\alpha_{m}\rangle\langle\alpha_{m}|U_{k}^{\dag}\right)  _{Q}\,.
\end{equation}
If the set of $n$-mode unitaries are selected uniformly at random according to
the Haar measure, one might expect that such a set of unitaries scrambles
phase information of $\alpha$ so that it is not accessible to Eve. Thus, a
presumably clever strategy for Eve is to perform a measurement that commutes
with the total number of photons. Such a POVM has elements%
\begin{equation}
\{|0\rangle\langle0|,\{\mu_{y}^{(1)}|\phi_{y}^{(1)}\rangle\langle\phi
_{y}^{(1)}|\}_{y},\{\mu_{y}^{(2)}|\phi_{y}^{(2)}\rangle\langle\phi_{y}%
^{(2)}|\}_{y},\dots\}, \label{eq:photon-number-POVM-elements}%
\end{equation}
where $|0\rangle$ is the $n$-mode vacuum, and for any $k\geq1$ each vector
$|\phi_{y}^{(k)}\rangle$ belongs to the $k$-photon subspace. This suboptimal
measurement allows Eve to achieve a mutual information $I_{\mathrm{num}%
}(M;Q)_{\rho}$ such that%
\begin{equation}
I_{\mathrm{num}}(M;Q)_{\rho}\leq I_{\mathrm{acc}}(M;Q)_{\rho}.
\end{equation}
For small values of $N_{\mathrm{tot}}\ll1$, the probability of having more
than one photon is of order $N_{\mathrm{tot}}^{2}$. We can hence argue that
for $N_{\mathrm{tot}}^{2}\ll1$, the main contribution to $I_{\mathrm{num}%
}(M;Q)_{\rho}$ comes from POVM elements in
(\ref{eq:photon-number-POVM-elements}) with $k=0,1$ and that the contribution
of those with $k\geq2$ is, in the worst case, of order $N_{\mathrm{tot}}%
^{2}\log_{2}{n}$. Noticing that the projection of the coherent state
$U_{k}|\alpha_{m}\rangle$ in the subspace spanned by the vacuum and the
single-photon subspace is
\begin{equation}
e^{-\frac{|\alpha|^{2}}{2}}(|0\rangle+\alpha U_{k}|m\rangle)=e^{-\frac
{|\alpha|^{2}}{2}}(|0\rangle+\alpha\sum_{m^{\prime}} \widetilde{U}%
_{k}^{\left(  m^{\prime},m\right)  } |m^{\prime}\rangle)\,,
\end{equation}
where $|m^{\prime}\rangle$ is the state of a single photon on the $m^{\prime}%
$th mode, a straightforward calculation leads to the following expression for
the lower bound $I_{\mathrm{num}}(M;Q)_{\rho}$:
\begin{equation}
I_{\mathrm{num}}(M;Q)_{\rho}=N_{\mathrm{tot}}\left[  \log_{2}{n}%
-\min_{\mathcal{M}_{E}^{(1)}}\sum_{y}\frac{\mu_{y}^{(1)}}{n
\left\vert \mathcal{K}\right\vert }\sum_{k}%
H(q_{yk})\right]  +O(N_{\mathrm{tot}}^{2}\log_{2}{n})\,, \label{Inum}%
\end{equation}
where the optimization is over the POVM $\mathcal{M}_{E}^{(1)}$ defined on the
single-photon subspace, with elements $\{\mu_{y}^{(1)}|\phi_{y}^{(1)}%
\rangle\langle\phi_{y}^{(1)}|\}_{i}$, and $q_{yk}^{m}=|\langle\phi_{y}%
^{(1)}|U_{k}|m\rangle|^{2}$.

The expression in square brackets in (\ref{Inum}) is formally the same as that
in (\ref{acc}). We can hence bound $I_{\mathrm{num}}(M;Q)_{\rho}$ using the
results of Ref.~\cite{FHS11}. It follows that there exist choices of
$\left\vert \mathcal{K}\right\vert $ $n$-mode passive linear-optical unitaries
with
\begin{equation}
\log_{2}{|\mathcal{K}|} = 4\log_{2}{(\varepsilon^{-1})} + O(\log_2\log_{2}{(\varepsilon^{-1})})\,,
\end{equation}
such that%
\begin{equation}
I_{\mathrm{num}}(M;Q)_{\rho} \leq \varepsilon N_{\mathrm{tot}} \log_{2}{n} + O(N_{\mathrm{tot}}^{2}\log_{2}{n})\,,
\end{equation}
with $\varepsilon$ arbitrarily small if $n$ is large enough. 

Clearly, the security condition $r_{1}\ll1$ can be satisfied only if
$I_{\mathrm{num}}(M;Q)_{\rho}\ll I_{\mathrm{acc}}(M;QK)_{\rho}$.
A necessary condition for $r_{1}\ll1$ to hold is
\begin{equation}
\varepsilon + O(N_{\mathrm{tot}}) \ll 1 \,,
\label{sec-cond}%
\end{equation}
which can be fulfilled in the case of weak coherent states, where $N_{\mathrm{tot}} \ll 1$.
In this regime, the key-efficiency condition $r_{2}<1$ can be satisfied only if
\begin{equation}
4\log_{2}{(\varepsilon^{-1} )} < N_{\mathrm{tot}} \log_{2}{n} \,.
\end{equation}
This implies that the value of $N_{\mathrm{tot}}$ has to be in the range
\begin{equation}\label{effweak}
1\gg N_{\mathrm{tot}} > \frac{4\log_{2}{(\varepsilon^{-1})}}{\log_{2} {n}} \, .
\end{equation}

In conclusion, this weak coherent state PPM protocol is analogous to the single-photon one in
the presence of linear loss. In principle the condition in (\ref{effweak}) can
always be fulfilled for $n$ large enough, yet the minimum value of $n$ increases exponentially with decreasing 
key-efficiency ratio $r_2$ and with decreasing $N_\mathrm{tot}$.

\section{Conclusion}

\label{end}In this paper, we formally defined the locking capacity of a
quantum channel in order to establish a framework for understanding the locking effect
in the presence of noise. We can distinguish between a weak locking capacity
and a strong one, the difference being whether the adversary has access to the
environment of the channel or to its input. We related these locking
capacities to other well known capacities from quantum Shannon theory such as
the quantum, private, and classical capacity. The existence of the
FHS\ locking protocol \cite{FHS11}\ establishes that both the weak and strong
capacity locking capacities are not smaller than the quantum capacity, while
the weak locking capacity is not smaller than the private capacity because a
private communication protocol always satisfies the demands of a weak locking
protocol. Furthermore, the classical capacity is a trivial upper bound on both
locking capacities. We also proved that the strong locking capacity is equal
to zero whenever a locking protocol has a classical simulation and that both
locking capacities are equal to zero for an entanglement-breaking channel.
This latter result demonstrates that a channel should have some ability to
preserve entanglement in order for non-zero locking rates to be achievable.
Moreover, we found a class of channels for which the weak locking capacity is equal
to both the private capacity and the quantum capacity.

As an important application, we considered the case of the pure-loss bosonic
channel and the locking capacities for channels restricted to coherent-state
encodings. We note that a particular example of such a protocol is the
$\alpha\eta$ protocol (also known as $Y00$) \cite{Y00}.
We found limitations of the locking capacity for these
coherent-state schemes:\ the strong locking capacity of any channel is not
larger than $\log_{2}\left(  e\right)  $ locked bits per channel use while the
weak locking capacity of the pure-loss bosonic channel is
not larger than the sum of its private capacity and $\log_{2}\left(
e\right)  $ locked bits per channel use. If the scheme exploits passive mode
transformations and the receiver uses heterodyne detection, the restrictions
are as severe as they can be:\ the strong locking capacity is equal to zero
because there is a classical simulation of such a protocol.

As a final contribution, we discussed locking schemes that exploit weak
coherent states and that might be physically implementable. They are similar to the
single-photon quantum enigma machine (QEM) of Ref.~\cite{QEM}, with the
exception that information is encoded by pulse position modulation (PPM) of a
coherent state of a given amplitude~$\alpha$, with $|\alpha|^{2}\ll1$, over
$n$ modes. The necessary conditions for security and key efficiency of this
scheme are qualitatively equivalent to that of the single-photon QEM.

The realization of a proof of principle demonstration of a quantum enigma
machine is a tremendous experimental challenge. The main difficulty to
overcome concerns the scaling of the physical resources required for key
efficient encryption. Notice that for single-photon PPM encoding, $n$ optical
modes are needed to encode $\log_{2}{n}$ bits, while the required secret key
has length of the order of $\log_{2}{\log_{2}{n}}$. As a consequence, the
number of modes increases very quickly if one requires small values of the key
efficiency ratio $r_{2}$, as defined in (\ref{r2}).
Although keeping the same scaling law, the resources required for a key
efficient single-photon QEM become even more demanding when one introduces loss in the
single-photon scheme and when one moves from the single-photon PPM to the weak
coherent-state PPM. On the other hand, in the case of a weak coherent-state
QEM, one can trade off the increase in the number of modes (and/or the
reduction in the key efficiency level) with the fact that coherent states can
be prepared deterministically and are much easier to handle than single-photon
states. However, it seems that one has to go beyond PPM encoding to overcome
the key efficiency limitations.

There are many open questions to consider going forward from here. Perhaps the
most pressing question is to determine a formula\ that serves as a good lower
bound on the locking capacities (that is, one would need to demonstrate
locking protocols with nontrivial achievable rates according to the
requirements stated in Definitions~\ref{def:weak-lock-protocol},
\ref{def:strong-lock-protocol}, and \ref{def:achievable-rate-locking}). One
might suspect that the formulas given in Theorems~\ref{thm:WLC} and
\ref{thm:up-bnd-strong-locking-cap}\ are in fact achievable, but it is not
clear to us if this is true. Furthermore, it is important to determine if
there is an example of channel (perhaps many?) for which its weak locking
capacity is strictly larger than its private classical capacity, and
similarly, if there is a channel for which its strong locking capacity is
strictly larger than its quantum capacity. We also suspect that the locking
capacity is non-additive, as is the case for other capacities in quantum
Shannon theory \cite{science2008smith,LWZG09,SS09}. If it is the case that the
50\% quantum erasure channel has a weak locking capacity equal to zero, then
it immediately follows from the results of Ref.~\cite{SS09}\ and our
operational bounds in (\ref{eq:weak-lock-bounds}) and
(\ref{eq:strong-lock-bounds}) that both the weak and strong locking capacities
are non-additive.

Another intriguing question is the relationship between the
strong locking and quantum identification capacities \cite{Win13}
of a quantum channel. Both seem to involve a weak form of
coherent data transmission from a sender to receiver. FHS even
established an explicit connection between locking using
unitary encodings and quantum identification over a channel
built out of the inverses of those unitaries \cite{FHS11}. It is tempting
to speculate that the single-letter formula for the amortized
quantum indentification capacity found in Ref.~\cite{HW12} could
thereby be recruited as a tool to study the locking capacity.

\bigskip

\textbf{Acknowledgements}. We are grateful to Omar Fawzi, Graeme Smith, and
Andreas Winter for helpful discussions. This research was supported by the
DARPA\ Quiness Program through US\ Army Research Office award W31P4Q-12-1-0019.
PH was supported by the Canada Research Chairs program, CIFAR, NSERC and ONR through grant N000140811249.

\appendix

\section{Private capacity of degraded quantum wiretap channels}

This appendix contains a proof that the private capacity of a degraded quantum
wiretap channel when restricted to product-state encodings is single-letter.
Also, we show by an appeal to Hastings' counterexample to the additivity
conjecture \cite{H09}\ that there exists two quantum wiretap channels that are
degraded but nevertheless have non-additive private information. This latter
result provides a simple answer to a question that has been open since the
introduction of weakly degradable channels \cite{CG06}.

A quantum wiretap channel is defined as a completely positive trace-preserving
map $\mathcal{N}_{A\rightarrow BE}$ from an input system $A$ to a legitimate
receiver's system $B$ and an eavesdropper's system $E$. Such a map has an
isometric extension $U_{A\rightarrow BEF}$ with the property that%
\[
\mathcal{N}_{A\rightarrow BE}\left(  \rho\right)  =\text{Tr}_{F}\left\{
U_{A\rightarrow BEF}\rho U_{A\rightarrow BEF}^{\dag}\right\}  .
\]
The private capacity of a quantum wiretap channel is given by
\cite{ieee2005dev,1050633}%
\[
\lim_{n\rightarrow\infty}\frac{1}{n}P\left(  \mathcal{N}_{A\rightarrow
BE}^{\otimes n}\right)  ,
\]
where $P\left(  \mathcal{N}_{A\rightarrow BE}\right)  $ is the private
information, defined as%
\begin{equation}
P\left(  \mathcal{N}_{A\rightarrow BE}\right)  \equiv\max_{\left\{
p_{X}\left(  x\right)  ,\rho_{x}\right\}  }I\left(  X;B\right)  _{\rho
}-I\left(  X;E\right)  _{\rho}, \label{eq:private-info-formula}%
\end{equation}
with the entropies taken with respect to the following classical-quantum
state:%
\begin{equation}
\rho_{XBE}\equiv\sum_{x}p_{X}\left(  x\right)  \left\vert x\right\rangle
\left\langle x\right\vert _{X}\otimes\mathcal{N}_{A\rightarrow BE}\left(
\rho_{x}\right)  . \label{eq:private-code-state}%
\end{equation}
Such a wiretap channel is degraded if there exists a degrading map
$\mathcal{D}_{B\rightarrow E}$ such that%
\[
\mathcal{D}_{B\rightarrow E}\circ\mathcal{N}_{A\rightarrow B}=\mathcal{N}%
_{A\rightarrow E}.
\]

\begin{theorem}
The private information formula in (\ref{eq:private-info-formula}) for two
degraded quantum wiretap channels is generally non-additive. That is, there
exists degraded quantum wiretap channels $\mathcal{N}_{1}$ and $\mathcal{N}%
_{2}$ such that%
\[
P\left(  \mathcal{N}_{1}\otimes\mathcal{N}_{2}\right)  >P\left(
\mathcal{N}_{1}\right)  +P\left(  \mathcal{N}_{2}\right)  .
\]

\end{theorem}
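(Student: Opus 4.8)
The plan is to reduce the private information of a suitably chosen wiretap channel to the (single-letter) Holevo information of its channel to Bob, and then to import the known non-additivity of the Holevo information established by Hastings \cite{H09}. The key observation is that if the eavesdropper's output system $E$ is trivial (one-dimensional), then the channel to Eve, $\mathcal{N}_{A\rightarrow E}$, is the constant map $\rho\mapsto\mathrm{Tr}[\rho]$, so that $I(X;E)_{\rho}=0$ for every ensemble $\{p_{X}(x),\rho_{x}\}$ appearing in (\ref{eq:private-code-state}). Consequently the formula (\ref{eq:private-info-formula}) collapses to $P(\mathcal{N}_{A\rightarrow BE})=\max_{\{p_{X}(x),\rho_{x}\}}I(X;B)_{\rho}=\chi(\mathcal{N}_{A\rightarrow B})$, the Holevo information of the channel to Bob. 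Moreover such a wiretap channel is automatically degraded: the degrading map $\mathcal{D}_{B\rightarrow E}(\cdot)=\mathrm{Tr}[\cdot]$ satisfies $\mathcal{D}_{B\rightarrow E}\circ\mathcal{N}_{A\rightarrow B}=\mathcal{N}_{A\rightarrow E}$ trivially.

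With this reduction in hand, I would take $\Phi_{1}$ and $\Phi_{2}$ to be a pair of channels whose Holevo informations are non-additive, that is, $\chi(\Phi_{1}\otimes\Phi_{2})>\chi(\Phi_{1})+\chi(\Phi_{2})$; the existence of such a pair is exactly Hastings' counterexample to the additivity conjecture \cite{H09} (for instance, a random-unitary channel and its complex conjugate). For each $i$, let $V_{i}$ be a Stinespring dilation of $\Phi_{i}$ and define the wiretap channel $\mathcal{N}_{i}$ to have channel-to-Bob equal to $\Phi_{i}$ and trivial eavesdropper system $E_{i}$. By the paragraph above, each $\mathcal{N}_{i}$ is degraded and $P(\mathcal{N}_{i})=\chi(\Phi_{i})$. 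Because the definition (\ref{eq:private-info-formula}) maximizes over all input ensembles---including states entangled across the two input systems---the product wiretap channel (whose combined eavesdropper $E_{1}E_{2}$ is again trivial) satisfies $P(\mathcal{N}_{1}\otimes\mathcal{N}_{2})=\chi(\Phi_{1}\otimes\Phi_{2})$. Combining these identities with Hastings' strict inequality yields $P(\mathcal{N}_{1}\otimes\mathcal{N}_{2})=\chi(\Phi_{1}\otimes\Phi_{2})>\chi(\Phi_{1})+\chi(\Phi_{2})=P(\mathcal{N}_{1})+P(\mathcal{N}_{2})$, which is the claim.

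I expect the substance of the argument to lie not in any calculation---once the trivial-eavesdropper reduction is spotted, the proof is essentially immediate---but in two conceptual points that must be handled cleanly. First, one must verify that the reduction genuinely produces degraded channels in the sense of the definition preceding the theorem, so that the example truly answers the open question about degraded (weakly degradable \cite{CG06}) wiretap channels rather than some weaker class; the trivial-$E$ construction makes this transparent. Second, one should explain why this non-additivity does not contradict the single-letterization result for degraded wiretap channels proved earlier in this appendix: that result is restricted to product-state encodings, whereas the gain here is driven entirely by input states entangled across $A_{1}A_{2}$, which is precisely the resource that Hastings' example exploits and that the product-state restriction forbids. I would close by remarking that more elaborate examples with a nontrivial but information-useless eavesdropper (e.g.\ a constant channel $\rho\mapsto\mathrm{Tr}[\rho]\,\sigma_{E}$ to a fixed state $\sigma_{E}$) yield the same conclusion, but the trivial-$E$ case already suffices.
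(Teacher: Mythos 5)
Your proposal is correct and follows essentially the same route as the paper: both reduce the private information to the Holevo information by making the eavesdropper's output independent of the input (the paper uses a constant-state channel $\rho\mapsto\mathcal{M}_i(\rho)\otimes\sigma_E$, which is exactly the variant you mention in closing, while you use a trivial one-dimensional $E$), note that such channels are trivially degraded, and then invoke Hastings' counterexample. Your additional remark explaining the consistency with the product-state single-letterization theorem is a nice touch but does not change the substance of the argument.
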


\begin{proof}
This result follows by exploiting the counterexample of Hastings \cite{H09}
for the Holevo information formula. Let $\mathcal{M}_{1}$ and $\mathcal{M}%
_{2}$ be the channels from Hastings' counterexample, i.e., they satisfy%
\begin{equation}
\chi\left(  \mathcal{M}_{1}\otimes\mathcal{M}_{2}\right)  >\chi\left(
\mathcal{M}_{1}\right)  +\chi\left(  \mathcal{M}_{2}\right)  ,
\label{eq:hastings-inequality}%
\end{equation}
where $\chi\left(  \mathcal{N}\right)  $ is the Holevo information of a
channel $\mathcal{N}$. We then construct our quantum wiretap channels
$\mathcal{M}_{1}$ and $\mathcal{M}_{2}$ as $\mathcal{N}_{1}\left(
\rho\right)  =\mathcal{M}_{1}\left(  \rho\right)  \otimes\sigma_{E}$ and
$\mathcal{N}_{2}\left(  \rho\right)  =\mathcal{M}_{2}\left(  \rho\right)
\otimes\sigma_{E}$. Both channels are obviously degraded wiretap channels
because the channel to the environment simply prepares a constant state
$\sigma_{E}$. Also, there is no dependence of the environment's output on the
input state, so that the private informations of these channels reduce to
Holevo informations:%
\begin{align*}
P\left(  \mathcal{N}_{1}\otimes\mathcal{N}_{2}\right)   &  =\chi\left(
\mathcal{M}_{1}\otimes\mathcal{M}_{2}\right)  ,\\
P\left(  \mathcal{N}_{1}\right)   &  =\chi\left(  \mathcal{M}_{1}\right)  ,\\
P\left(  \mathcal{N}_{2}\right)   &  =\chi\left(  \mathcal{M}_{2}\right)  .
\end{align*}
Thus, the inequality in the statement of the theorem follows from
(\ref{eq:hastings-inequality}).
\end{proof}

\begin{theorem}
\label{thm:private-info-additive}The private information of a degraded quantum
wiretap channel is additive when restricted to product state encodings.
\end{theorem}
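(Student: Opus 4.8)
The plan is to prove the two inequalities separately, since additivity here means that evaluating the single-letter private information $P$ defined in (\ref{eq:private-info-formula}) on $\mathcal{N}_1 \otimes \mathcal{N}_2$, over product-state ensembles $\{p_X(x), \rho_x^{(1)} \otimes \rho_x^{(2)}\}$, returns $P(\mathcal{N}_1) + P(\mathcal{N}_2)$. Superadditivity is immediate: feeding optimal individual ensembles $\{p_1(x_1), \rho_{x_1}^{(1)}\}$ and $\{p_2(x_2), \rho_{x_2}^{(2)}\}$ into a product ensemble indexed by $X = (X_1, X_2)$ makes the private information split as a sum by additivity of entropy on product states. The real content is therefore subadditivity, which I would prove ensemble by ensemble.

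Fix an arbitrary product-state ensemble and let $\rho_{X B_1 E_1 B_2 E_2}$ be the induced classical-quantum state, with $B_i E_i$ the output of an isometric extension of $\mathcal{N}_i$ and $\mathcal{D}_i \colon B_i \to E_i$ the degrading map. The structural fact driving everything is that, conditioned on the classical value $X = x$, the state factorizes as $\rho^x_{B_1 E_1} \otimes \rho^x_{B_2 E_2}$, so that given $X$ the systems $B_2$ and $E_2$ are each independent of both $B_1$ and $E_1$. The chain rule gives
\[
I(X;B_1 B_2) - I(X;E_1 E_2) = \left[ I(X;B_1) - I(X;E_1) \right] + \left[ I(X;B_2 | B_1) - I(X;E_2 | E_1) \right],
\]
and the first bracket is precisely the private information of the marginal ensemble on $\mathcal{N}_1$, hence at most $P(\mathcal{N}_1)$.

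The crux, and the main obstacle, is bounding the second bracket by $P(\mathcal{N}_2)$ despite the mismatch in conditioning systems: the receiver term conditions on $B_1$ while the eavesdropper term conditions on the weaker $E_1$. I would resolve this in two steps, each using degradability together with the independence-given-$X$ identities above. First, since $H(E_2 | X E_1) = H(E_2 | X) = H(E_2 | X B_1)$, one computes $I(X;E_2 | E_1) - I(X;E_2 | B_1) = H(E_2 | E_1) - H(E_2 | B_1) \geq 0$, the inequality being monotonicity of conditional entropy under the degrading channel $\mathcal{D}_1$ acting on the conditioning system; this replaces the conditioning $E_1$ by $B_1$ at no cost, so the second bracket is at most $I(X;B_2 | B_1) - I(X;E_2 | B_1)$. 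Second, using $H(B_2 | X B_1) = H(B_2 | X)$ and $H(E_2 | X B_1) = H(E_2 | X)$, this last quantity equals $\left[ H(B_2 | B_1) - H(E_2 | B_1) \right] - \left[ H(B_2 | X) - H(E_2 | X) \right]$, which is bounded above by $I(X;B_2) - I(X;E_2)$ exactly when $I(B_1;E_2) \leq I(B_1;B_2)$ --- and this is the data-processing inequality for the degrading map $\mathcal{D}_2 \colon B_2 \to E_2$. Hence the second bracket is at most the private information of the marginal ensemble on $\mathcal{N}_2$, and so at most $P(\mathcal{N}_2)$.

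Putting the two brackets together shows that every product-state ensemble obeys $I(X;B_1 B_2) - I(X;E_1 E_2) \leq P(\mathcal{N}_1) + P(\mathcal{N}_2)$; maximizing the left side gives subadditivity, which with superadditivity yields the claimed equality. The two analytic ingredients --- monotonicity of conditional entropy under a channel on the conditioned system, and data processing for mutual information --- are both standard consequences of strong subadditivity, so I expect no difficulty there. The step demanding the most care is the bookkeeping behind the three independence-given-$X$ identities, all of which hold because $X$ is classical and the encoding is a product across the two channels; this is exactly where the product-state hypothesis enters, and it is also the reason the argument does not obviously extend to entangled encodings.
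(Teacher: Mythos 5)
Your proof is correct, but it takes a genuinely different route from the paper's. The paper first proves a reduction lemma showing that for a degraded wiretap channel the private information is maximized on pure-state ensembles, then introduces an isometric extension $U_{A\rightarrow BEF}$ and rewrites the private information via the identity $I\left(X;B\right)-I\left(X;E\right)=H\left(B\right)-H\left(E\right)+H\left(F|BX\right)$ in (\ref{eq:useful-identity-private-info-wiretap}); subadditivity then follows from the degradedness bound $I\left(B_{1};B_{2}\right)\geq I\left(E_{1};E_{2}\right)$ together with three applications of strong subadditivity to get $H\left(F_{1}F_{2}|B_{1}B_{2}X\right)\leq H\left(F_{1}|B_{1}X\right)+H\left(F_{2}|B_{2}X\right)$. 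You instead work directly with arbitrary (mixed) product ensembles, never invoke a purification or the auxiliary system $F$, and organize the argument around the chain rule $I\left(X;B_{1}B_{2}\right)-I\left(X;E_{1}E_{2}\right)=\left[I\left(X;B_{1}\right)-I\left(X;E_{1}\right)\right]+\left[I\left(X;B_{2}|B_{1}\right)-I\left(X;E_{2}|E_{1}\right)\right]$, repairing the conditioning mismatch in the second bracket with two data-processing steps through the degrading maps plus the conditional-independence identities that hold because $X$ is classical and the encoding is product. Your two steps are, in substance, the chain $I\left(E_{1};E_{2}\right)\leq I\left(B_{1};E_{2}\right)\leq I\left(B_{1};B_{2}\right)$ that the paper uses in one shot, so the essential use of degradedness is the same; but your bookkeeping replaces both the pure-state reduction and the strong-subadditivity step on the $F$ systems. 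What your route buys is self-containedness and a slightly stronger statement (the ensemble-wise bound holds for mixed signal states without any preliminary reduction); what the paper's route buys is the reusable identity (\ref{eq:useful-identity-private-info-wiretap}) and the pure-state reduction as a standalone fact. You also verify the trivial superadditivity direction explicitly, so that additivity is an equality, whereas the paper only proves the subadditivity direction it actually needs; both treatments are adequate for the application in (\ref{eq:heterodyne-bound-1}).
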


\begin{proof}
First, consider that we can always restrict the optimization in the private
information formula to be taken over pure input states whenever the quantum
wiretap channel $\mathcal{N}_{A\rightarrow BE}$\ is degraded. Indeed, consider
the extension state%
\begin{equation}
\rho_{XYBE}\equiv\sum_{x,y}p_{X}\left(  x\right)  p_{Y|X}\left(  y|x\right)
\left\vert x\right\rangle \left\langle x\right\vert _{X}\otimes\left\vert
y\right\rangle \left\langle y\right\vert _{Y}\otimes\mathcal{N}_{A\rightarrow
BE}\left(  \psi_{x,y}\right)  , \label{eq:ext-state}%
\end{equation}
where we are using a spectral decomposition for each state $\rho_{x}$:%
\[
\rho_{x}=\sum_{y}p_{Y|X}\left(  y|x\right)  \psi_{x,y}.
\]
Thus, the state in (\ref{eq:private-code-state}) is a reduction of the state
in (\ref{eq:ext-state}). Now consider that%
\begin{align*}
I\left(  X;B\right)  _{\rho}-I\left(  X;E\right)  _{\rho}  &  =I\left(
XY;B\right)  -I\left(  XY;E\right)  -\left[  I\left(  Y;B|X\right)  -I\left(
Y;E|X\right)  \right] \\
&  \leq I\left(  XY;B\right)  -I\left(  XY;E\right) \\
&  \leq P\left(  \mathcal{N}_{A\rightarrow BE}\right)  .
\end{align*}
The first equality is from the chain rule for mutual information. The first
inequality follows by exploiting the degrading condition and from the fact
that $X$ is classical. The final inequality follows by considering $XY$ as a
joint classical system, so that the private information of the channel can
only be larger than $I\left(  XY;B\right)  -I\left(  XY;E\right)  $.

Now consider an isometric extension $U_{A\rightarrow BEF}$\ of a quantum
wiretap channel $\mathcal{N}_{A\rightarrow BE}$. By using the fact that the
private information is optimized for pure state ensembles, we can always
rewrite it as%
\begin{align}
I\left(  X;B\right)  -I\left(  X;E\right)   &  =H\left(  B\right)  -H\left(
E\right)  -H\left(  B|X\right)  +H\left(  E|X\right) \nonumber\\
&  =H\left(  B\right)  -H\left(  E\right)  -H\left(  B|X\right)  +H\left(
BF|X\right) \nonumber\\
&  =H\left(  B\right)  -H\left(  E\right)  +H\left(  F|BX\right)  ,
\label{eq:useful-identity-private-info-wiretap}%
\end{align}
where in the second line we used the fact that $H\left(  E|X\right)  =H\left(
BF|X\right)  $ for pure-state ensembles.

Now we show the additivity property for product-state ensembles. Consider the
following state on which we evaluate information quantities:%
\[
\sigma_{XB_{1}E_{1}F_{1}B_{2}E_{2}F_{2}}\equiv\sum_{x}p_{X}\left(  x\right)
\left\vert x\right\rangle \left\langle x\right\vert _{X}\otimes U_{A_{1}%
\rightarrow B_{1}E_{1}F_{1}}\left(  \phi_{x}\right)  \otimes U_{A_{2}%
\rightarrow B_{2}E_{2}F_{2}}\left(  \psi_{x}\right)  ,
\]
where we are restricting the signaling states to be product and without loss
of generality we can take them to be pure as shown above. Consider that%
\begin{align*}
&  I\left(  X;B_{1}B_{2}\right)  _{\sigma}-I\left(  X;E_{1}E_{2}\right)
_{\sigma}\\
&  =H\left(  B_{1}B_{2}\right)  _{\sigma}-H\left(  E_{1}E_{2}\right)
_{\sigma}+H\left(  F_{1}F_{2}|B_{1}B_{2}X\right) \\
&  =H\left(  B_{1}\right)  _{\sigma}+H\left(  B_{2}\right)  _{\sigma}-H\left(
E_{1}\right)  _{\sigma}-H\left(  E_{2}\right)  _{\sigma}-\left[  I\left(
B_{1};B_{2}\right)  _{\sigma}-I\left(  E_{1};E_{2}\right)  _{\sigma}\right]
+H\left(  F_{1}F_{2}|B_{1}B_{2}X\right) \\
&  \leq H\left(  B_{1}\right)  _{\sigma}+H\left(  B_{2}\right)  _{\sigma
}-H\left(  E_{1}\right)  _{\sigma}-H\left(  E_{2}\right)  _{\sigma}+H\left(
F_{1}|B_{1}X\right)  +H\left(  F_{2}|B_{2}X\right) \\
&  =\left[  I\left(  X;B_{1}\right)  -I\left(  X;E_{1}\right)  \right]
+\left[  I\left(  X;B_{2}\right)  -I\left(  X;E_{2}\right)  \right]  .
\end{align*}
The first equality follows from the identity in
(\ref{eq:useful-identity-private-info-wiretap}). The second equality follows
from entropy identities. The first inequality follows from the degraded
wiretap channel assumption, so that $I\left(  B_{1};B_{2}\right)  _{\sigma
}-I\left(  E_{1};E_{2}\right)  _{\sigma}\geq0$ and by applying strong
subadditivity of entropy \cite{LR73}\ three times to get\ that $H\left(
F_{1}F_{2}|B_{1}B_{2}X\right)  \leq H\left(  F_{1}|B_{1}X\right)  +H\left(
F_{2}|B_{2}X\right)  $. The last equality follows from the identity in
(\ref{eq:useful-identity-private-info-wiretap}) and the fact that we are
restricting to product-state signaling ensembles.
\end{proof}

\bibliographystyle{plain}
\bibliography{Ref}

\end{document}